\documentclass[aps,prx,twocolumn,superscriptaddress,10pt,longbibliography,floatfix]{revtex4-2}

\usepackage{amsmath,amssymb,amsthm,mathtools}
\usepackage{bm}
\usepackage{graphicx}
\usepackage{booktabs}
\usepackage{xcolor}
\usepackage{enumitem}
\usepackage[caption=false]{subfig}
\usepackage{tikz}
\usepackage{quantikz}
\usepackage[ruled,vlined,linesnumbered]{algorithm2e}

\usepackage{orcidlink}

\usetikzlibrary{arrows.meta,positioning,calc,decorations.pathreplacing,fit,backgrounds,shapes.geometric}

\setlist{nosep,leftmargin=*}

\hypersetup{
	colorlinks=true,
	linkcolor=blue!70!black,
	citecolor=green!50!black,
	urlcolor=blue!60!black
}

\newcommand{\R}{\mathbb{R}}
\newcommand{\E}{\mathbb{E}}
\newcommand{\Var}{\mathrm{Var}}
\newcommand{\calO}{\mathcal{O}}
\newcommand{\calL}{\mathcal{L}}

\newcommand{\calC}{\mathcal{C}}
\newcommand{\btheta}{\bm{\theta}}
\newcommand{\by}{\bm{y}}
\newcommand{\ba}{\bm{a}}
\newcommand{\bW}{\bm{W}}
\newcommand{\bA}{\bm{A}}
\newcommand{\bM}{\bm{M}}
\newcommand{\bF}{\bm{F}}

\newcommand{\bI}{\bm{I}}
\newcommand{\bs}{\bm{s}}

\newcommand{\GEANT}{\textsc{Geant4}}

\newtheorem{theorem}{Theorem}

\begin{document}
	
	\title{Quantum Feature Amplification Network (QFAN) as An Autoregressive Quantum Generative Model}
	
	\author{Jamal Slim\orcidlink{0000-0002-9418-8459}}
	\email{jamal.slim@desy.de}
	\affiliation{Deutsches Elektronen-Synchrotron, 22603  Hamburg, Germany}
	\author{Saverio Monaco\orcidlink{0000-0001-8784-5011}}
	\affiliation{Deutsches Elektronen-Synchrotron, 22603 Hamburg, Germany}
	\affiliation{RWTH Aachen University, 52062 Aachen, Germany}
	\author{Florian Rehm\orcidlink{0000-0002-8337-0239}}
	\affiliation{European Organization for Nuclear Research (CERN), 1211 Geneva, Switzerland}
	\author{Dirk Kr\"ucker\orcidlink{0000-0003-1610-8844}}
	\affiliation{Deutsches Elektronen-Synchrotron, 22603 Hamburg, Germany}
	\author{Kerstin Borras\orcidlink{0000-0003-1111-249X}}
	\affiliation{Deutsches Elektronen-Synchrotron, 22603 Hamburg, Germany}
	\affiliation{RWTH Aachen University, 52062 Aachen, Germany}
	\date{\today}
	
	% ═══════════════════════════════════════════════════════
	% ABSTRACT (rewritten: concrete problem first, short sentences,
	% no undefined vocabulary in the opening)
	% ═══════════════════════════════════════════════════════
	\begin{abstract}
		Simulating calorimeter showers is among the largest computational costs in high-energy physics, and quantum generative models have been proposed as alternatives to classical surrogates. Their progress is hindered by a resource problem. In existing gate-model proposals, the quantum register grows with the image size. Benchmark data sets have thousands of cells and are therefore out of reach. We introduce the Quantum Feature Amplification Network (QFAN), which breaks the link between register size and image size. QFAN splits an image into consecutive blocks of pixels and generates them one block at a time, each produced by the same small circuit conditioned on a fixed-length summary of the pixels already generated. The number of qubits is set by the block size, not by the image dimension. The circuit is used as a sampler. Each block is decoded from a finite set of Born measurement records, so the stochasticity of the generated shower arises from measurement randomness rather than classical noise. A tunable fraction of the records is shared among the pixels within a block to control their correlations. Fast training is performed on a noiseless simulator using analytic gradients, and the resulting model is then deployed on IBM's Heron QPU. Using only three qubits and 12 (18) shared quantum-circuit parameters, QFAN reproduces pixel-intensity spectra, inter-pixel correlations, and total deposited energy for 12- and 25-pixel benchmarks.
		
		We quantify the contribution of the quantum component through an ablation study in which individual pipeline elements are removed and the remainder refitted. Replacing the sampled records by their conditional means, which removes only the measurement randomness, collapses the model to a single deterministic image. Leaving the circuit untrained while refitting every classical stage reproduces neither the pixel spectra nor the correlations at either image size.
	\end{abstract}
	\maketitle
	
	% ═══════════════════════════════════════════════════════
	\section{Introduction}\label{sec:intro}
	% ═══════════════════════════════════════════════════════
	
	Most particles entering a calorimeter initiate a stochastic cascade of secondary interactions, depositing energy across hundreds to thousands of detector cells. Simulating these showers from first principles with \GEANT{}~\cite{geant4} is one of the largest consumers of computing in LHC experiments, and projections for the High-Luminosity LHC era place the annual budget at more than $10^{11}$ events~\cite{hllhc_computing}, a demand that cannot be met with current resources.
	Classical surrogate models based on deep generative networks~\cite{calogan,caloflow,calodiffusion,atlasfast3} have achieved speedups of $10^3$ to $10^6$, but they use a larger number of trainable parameters, typically $10^6$ to $10^7$, and rely often on GPU inference.
	
	Quantum generative models offer a different design point. Compact circuits whose correlation structure is carried by entanglement rather than by depth of a classical network. In principle a few-qubit circuit can encode nontrivial dependencies with far fewer parameters than a classical generative network.
	In practice, existing gate-model calorimeter proof-of-concepts and quantum-assisted generators~\cite{chang2021_dualpqcgan,rehm2023_fullqgan,caloqvae2024,toledomarin2025} are limited either to strongly downscaled images or to hybrid latent-variable constructions.
	The reason is a resource bottleneck. If a circuit emits the image pixels directly from its register, then the register must be as large as the image. Angle-encoded models need $n_q=\Omega(d)$ qubits for a $d$-pixel image. Amplitude-encoded models need only $\lceil\log_2 d\rceil$ qubits, but they output a probability distribution rather than real-valued intensities, and preparing a generic state costs circuit depth growing with $d$. Either way the cost grows with the image.
	
	The idea behind QFAN is that the circuit does not need to hold the whole image at once.
	We split the $d$ pixels into $B$ consecutive blocks of at most $b$ pixels, and generate the blocks one after another with the \emph{same} $n_q$-qubit circuit (Fig.~\ref{fig:paradigm}).
	The register then has to be large enough for one block, not for the image, so $n_q$ is set by $b$ and not by $d$.
	Between blocks, a streaming count-sketch~\cite{charikar2002} compresses all previously generated pixels into a fixed-length vector that conditions the next circuit call, and the circuit's Pauli expectation values are mapped to pixel intensities by a ridge regressor with a closed-form solution.
	Because the decoder is closed-form, the loss remains an explicit function of the circuit parameters, and we optimize it with exact analytic gradients rather than a stochastic-perturbation estimator. No bilevel optimization is required.
	Figure~\ref{fig:outline} unrolls one full generation pass and shows explicitly how the same circuit is reused across blocks.
	
	This buys register size at the cost of sequential depth. A larger image is handled by more autoregressive steps rather than by more qubits. Two consequences frame the rest of the paper. First, for the Pauli family we measure, the number of circuits executed per training step does not depend on $d$ (Sec.~\ref{sec:training}). Second, the accuracy of the blockwise decoder limits how many pixels one block may contain, which in turn sets how many blocks a given image needs. We quantify this with a measured feature-to-target ratio and use it, cautiously, for extrapolation (Sec.~\ref{sec:outlook}).
	
	We validate QFAN at $d{=}12$ and $d{=}25$ pixels with a 3-qubit circuit, on a noiseless simulator and on IBM \texttt{ibm\_fez} hardware. The architecture reproduces per-pixel marginals, inter-pixel correlations and the total deposited energy on both backends and at both image sizes. Freezing the circuit at its untrained initialization and refitting every classical stage leaves a model that reproduces neither the marginals nor the correlations, which establishes that the trained circuit produces both (Sec.~\ref{sec:certificates}).
	
	Reducing the register requirement this far also determines the scale at which a physical device becomes necessary. Section~\ref{sec:simulability} quantifies that scale. Section~\ref{sec:simulability} shows that within the present architecture large registers are demanded only when the autoregressive chain must be kept short, quantifies that trade, and argues that width alone does not buy classical hardness at this depth and observable weight. It then identifies the modification that would change the picture, namely using the circuit as a sampler rather than as an estimator of few-body expectation values, and states what that would cost. The two are linked by a single unmeasured quantity, the growth of exposure bias with chain length, which therefore decides not only whether the method scales but whether it has a quantum motivation at all.
	
	Training is teacher-forced, meaning each block is conditioned on ground-truth prefixes through a pre-computed sketch cache, whereas generation is free-running, with each block conditioned on the model's own previous outputs. At the depths demonstrated here ($B{=}6$ and $B{=}13$) the difference is mild, but for the much longer chains implied by benchmark geometries the accumulation of model bias along the chain, known as exposure bias, is a plausible dominant failure mode. Our shot-noise analysis bounds measurement noise only. It covers measurement noise. The scaling of sample fidelity with chain length is quantified by the measurement set out in Sec.~\ref{sec:next}.
	
	\begin{figure}[tb]
		\centering
		\resizebox{\columnwidth}{!}{%
			\begin{tikzpicture}[scale=0.65, font=\small,
				circ/.style={draw, circle, minimum size=0.35cm, fill=#1, inner sep=0pt},
				lbl/.style={font=\tiny, text=gray!70!black}]
				\node[font=\scriptsize\bfseries] at (2.2, 6.0) {Prior direct-register QML: $n_q{=}d$};
				\foreach \row in {0,1,2} {\foreach \col in {0,1,2,3} {
						\pgfmathtruncatemacro{\idx}{4*\row+\col}
						\node[circ=red!25] (qL\idx) at ({0.5+\col*0.75}, {4.6-\row*0.75}) {};}}
				\node[draw, dashed, red!50, rounded corners, fit=(qL0)(qL11), inner sep=5pt] (regL) {};
				\node[lbl, below=8pt of regL,font=\small] {12 qubits for $d{=}12$};
				\draw[->, thick, red!60] (regL.south) -- ++(0,-0.55) node[below=10pt, lbl,font=\small, align=center] {$\tilde{\mathbf{y}}\in\mathbb{R}^{12}$};
				\draw[gray!40, thick, dashed] (4.5,1) -- (4.5,5.5);
				\node[draw, fill=white, rounded corners, font=\scriptsize\bfseries] at (4.5,3.2) {vs};
				\node[font=\scriptsize\bfseries] at (8.2, 6.0) {QFAN: $n_q{\ll}d$};
				\node[circ=blue!40] (qR0) at (8.3,5.1) {};
				\node[circ=blue!40] (qR1) at (9.1,5.1) {};
				\node[circ=blue!40] (qR2) at (9.9,5.1) {};
				\node[draw, blue!60, rounded corners, fit=(qR0)(qR2), inner sep=4pt] (regR) {};
				\draw[->, thick, blue!60] (regR.south) -- ++(0,-0.3) -| (6.8,3.8) node[pos=0.55, below=7pt, lbl,font=\small] {block 1,};
				\draw[->, thick, blue!60] (regR.south) -- ++(0,-0.7) node[pos=1.05, below, lbl,font=\small] {block 2, \ldots,};
				\draw[->, thick, blue!60, dashed] (regR.south) -- ++(0,-0.3) -| (11.4,3.8) node[pos=0.55, below=7pt, lbl,font=\small] {block $B$};
				\foreach \i in {0,...,11} {\fill[blue!20] ({6.8+\i*0.3},2.2) rectangle ++(0.25,0.45); \draw[blue!40] ({6.8+\i*0.3},2.2) rectangle ++(0.25,0.45);}
				\node[lbl,font=\small] at (9.2,1.5) {$\tilde{\mathbf{y}}\in\mathbb{R}^{12}$ (assembled from $B$ blocks)};
			\end{tikzpicture}
		}
		\caption{Paradigm comparison. Prior direct-register QML requires $n_q{=}d$ qubits (left). QFAN reuses a fixed $n_q$-qubit circuit across $B$ blocks (right), so the qubit requirement is set by the block size $b$ rather than by the full image dimension $d$.}
		\label{fig:paradigm}
	\end{figure}
	
	% --- Outline figure: the chain, and the two ways to close it ---
	\begin{figure*}[tb]
		\centering
		\resizebox{0.99\textwidth}{!}{%
			\begin{tikzpicture}[
				font=\small,
				q/.style={draw, rounded corners=2pt, fill=orange!16, minimum height=0.78cm,
					minimum width=1.65cm, font=\small},
				m/.style={draw, rounded corners=2pt, fill=blue!10, minimum height=1.02cm,
					minimum width=1.6cm, font=\scriptsize, align=center},
				r/.style={draw, rounded corners=2pt, fill=red!8, minimum height=0.78cm,
					minimum width=2.45cm, font=\scriptsize, align=center},
				g/.style={draw=gray!55, rounded corners=2pt, fill=gray!8, minimum
					height=0.78cm, minimum width=2.45cm, font=\scriptsize,
					align=center, text=gray!60!black},
				w/.style={draw, rounded corners=2pt, fill=yellow!25, minimum height=0.78cm,
					minimum width=1.05cm, font=\small},
				qw/.style={-{Stealth[length=2mm]}, thick},
				cw/.style={-{Stealth[length=2mm]}, thick, double, double distance=1pt},
				gw/.style={-{Stealth[length=2mm]}, thick, gray!55, double,
					double distance=1pt},
				fb/.style={-{Stealth[length=2mm]}, thick, blue!70},
				note/.style={font=\scriptsize, text=gray!60!black, align=center}]
				
				\newcommand{\bornicon}[2]{%
					\begin{scope}[shift={($(#1.center)+(0,-0.42)$)}]
						\foreach \h [count=\i from 0] in {#2}
						{\fill[blue!55] ({\i*0.078-0.28},0) rectangle ++(0.056,{\h*0.34});}
						\draw[gray!55,line width=0.3pt] (-0.30,0) -- (0.33,0);
				\end{scope}}
				% a small shower strip: 6 cells with given intensities
				\newcommand{\shower}[3]{% #1=x #2=y #3=intensities
					\begin{scope}[shift={(#1,#2)}]
						\foreach \v [count=\i from 0] in {#3}
						{\fill[blue!\v!white,draw=gray!45,line width=0.2pt]
							({\i*0.16},0) rectangle ++(0.15,0.30);}
				\end{scope}}
				
				\foreach \b/\name/\yy in {0/1/0, 1/2/-1.85, 2/B/-3.70} {
					\node (in\b) at (0,\yy) {$\ket{0}^{\otimes n_q}$};
					\node[q] (U\b) at (2.05,\yy) {$U(\ba_{\name},\btheta)$};
					\node[m] (M\b) at (4.90,\yy) {measure\\$Z,X,Y$\\[6pt]};
					\node[r] (Rc\b) at (7.70,\yy)
					{$\bar{\bm f}_{\name}=\tfrac1k\sum_{j}\bm r_j$};
					\node[w] (W\b) at (10.35,\yy) {$\bW_{\name}$};
					\node (out\b) at (11.85,\yy) {$\hat{\by}_{\name}$};
					\draw[qw] (in\b) -- (U\b);
					\draw[qw] (U\b) -- node[above,note]{$\ket{\psi_{\name}}$} (M\b);
					\draw[cw] (M\b) -- node[above,note]{$\times k$} (Rc\b);
					\draw[cw] (Rc\b) -- (W\b);
					\draw[cw] (W\b) -- (out\b);
				}
				\bornicon{M0}{0.45,1.0,0.18,0.72,0.30,0.10,0.55,0.22}
				\bornicon{M1}{0.95,0.22,0.60,0.15,0.80,0.35,0.12,0.48}
				\bornicon{M2}{0.25,0.68,1.0,0.30,0.14,0.52,0.20,0.38}
				\foreach \a/\b in {0/1, 1/2} {
					\draw[fb] (out\a.south) -- ++(0,-0.36) -| ($(U\b.north)+(-0.50,0)$)
					node[pos=0.25,below,note]{sketch update};
				}
				\draw[decorate, decoration={brace, amplitude=4pt}, gray!65]
				($(U0.north west)+(0,0.10)$) -- ($(U0.north east)+(0,0.10)$);
				\node[note,anchor=south,text width=2.3cm] at ($(U0.north)+(-0.55,0.26)$)
				{one circuit, one $\btheta$,\\reused every step};
				\node[note,anchor=south,text width=3.0cm] at ($(M0.north)+(0.35,0.18)$)
				{Born distribution\\$|\langle x|\psi_\beta\rangle|^{2}$, differs per row};
				
				% ===================== the two ways to close the loop ============
				\coordinate (hub) at (13.6,-1.85);
				% -- v5.0 branch: keep the records -> a distribution of showers
				\node[note,anchor=west,align=left,text width=4.0cm] at (13.30,-0.28)
				{\textbf{keep the $k$ records}};
				\shower{13.45}{-0.95}{15,55,90,70,35,12}
				\shower{13.63}{-1.10}{22,70,80,55,28,10}
				\shower{13.81}{-1.25}{10,45,100,82,40,18}
				\node[note,anchor=west,align=left,text width=3.7cm] at (15.05,-1.10)
				{run the whole generator\\three times: three
					\emph{different}\\showers, $\Var(E)>0$};
				\draw[cw] (out0.east) -- ++(0.35,0) |- (13.45,-1.20);
				
				% -- conventional branch: replace by expectations -> one image
				\node[g] (Ehat) at (14.55,-2.88)
				{use $\langle\mathcal O\rangle$ instead of $\bar{\bm f}_\beta$
					\ (i.e.\ $k\!\to\!\infty$)};
				\shower{13.45}{-4.20}{16,58,92,72,36,14}
				\shower{13.63}{-4.35}{16,58,92,72,36,14}
				\shower{13.81}{-4.50}{16,58,92,72,36,14}
				\node[note,anchor=west,align=left,text width=3.7cm] at (15.05,-4.35)
				{run the whole generator\\three times: the
					\emph{identical}\\shower, $\Var(E)=0$ (Sec.~\ref{sec:certificates})};
				\draw[gw] (Rc2.east) -- ++(0.30,0) |- (Ehat.west);
				\draw[gw] (Ehat.south) -- (14.55,-3.78);
				\node[red!70,font=\normalsize] at (14.55,-3.52) {$\times$};
				\draw[red!55,dashed,rounded corners]
				($(Ehat.north west)+(-0.20,0.16)$) rectangle (18.5,-5);
			\end{tikzpicture}
		}%

		\caption{QFAN unrolled, and the single design choice that makes it generative. Each row is one autoregressive step and runs the \emph{same} circuit $U$ with the \emph{same} parameters $\btheta$. Only the angles $\ba_\beta$ differ, computed from a fixed-length sketch of the pixels already generated (blue). Conditioning acts by reshaping the prepared state $\ket{\psi_\beta}$, so the Born distribution $|\langle x|\psi_\beta\rangle|^{2}$ over measurement outcomes differs from row to row (mimicked by the miniature histograms). The circuit is then sampled $k$ times, giving parity vectors $\bm r_1,\ldots,\bm r_k\in\{\pm1\}^{p_f}$, and the block is decoded from their finite-$k$ average $\bar{\bm f}_\beta=\frac1k\sum_j\bm r_j$ [Eq.~\eqref{eq:fbar}]. \emph{Right.} The two ways of closing this loop, which is where the present architecture departs from a variational feature model. Keeping the records (top) leaves the sampling fluctuation in the output. Running the generator from the same starting point three separate times then yields three \emph{different} showers, because each run draws its own measurement records. This fluctuation is the model's only stochastic element and the resulting spread scales as $k^{-1/2}$. Replacing $\bar{\bm f}_\beta$ by the corresponding expectation values $\langle\mathcal O\rangle$ (bottom, greyed) removes the shot noise, and with it the model's only stochastic element. The rollout becomes a deterministic function of the initial sketch, so the same three runs return the \emph{identical} shower and every pixel has zero variance (Sec.~\ref{sec:certificates}). Enlarging the image adds rows on the left, not qubits to the register.}
		\label{fig:outline}
	\end{figure*}
	
	The paper proceeds as follows.
	Section~\ref{sec:problem} introduces the generation task and the autoregressive decomposition.
	Section~\ref{sec:architecture} follows the data through the pipeline.
	Section~\ref{sec:training} covers training and per-step cost, and Sec.~\ref{sec:noise} the propagation of shot noise.
	Section~\ref{sec:dual} describes the two execution paths and Sec.~\ref{sec:results} the results at $d{=}12$.
	Section~\ref{sec:certificates} isolates the contribution of the quantum circuit from that of the classical decoder.
	Section~\ref{sec:outlook} collects the resource outlook, including the hardware transfer gap and the extrapolations to benchmark geometries, all of which are explicitly labeled as projections.
	Section~\ref{sec:comparison} positions QFAN among prior work, and Sec.~\ref{sec:next} sets out the next steps.
	Appendix~\ref{app:25px} reports the $d{=}25$ experiment, on both simulator and hardware.
	
	% ═══════════════════════════════════════════════════════
	\section{Problem Formulation}\label{sec:problem}
	% ═══════════════════════════════════════════════════════
	
	A calorimeter shower image $\by\in\R^d$ records the energy deposited in $d$ detector cells.
	The target distribution $p_{\mathrm{data}}$ combines two difficulties. Each cell marginal is strongly non-Gaussian, typically unimodal with a skewed tail whose width and peak position vary from cell to cell. In addition, the cells are strongly correlated across detector layers through energy conservation and shower-shape constraints.
	A useful surrogate must reproduce both. Matching the marginals while producing independent pixels is not sufficient for downstream analysis.
	
	The constraint that drives every design decision here is $n_q\ll d$. At the demonstration scale the ratio is modest ($n_q{=}3$, $d{=}12$), whereas benchmark calorimeter images are far larger~\cite{Krause_2025}.
	
	Our approach uses the standard autoregressive factorization of an image distribution, as used by classical pixel-recurrent and pixel-convolutional models~\cite{pixelrnn}.
	Partition the $d$ pixels into $B$ contiguous blocks $\by_1,\ldots,\by_B$ of at most $b$ pixels each, and write $\by_{<\beta}\equiv(\by_1,\ldots,\by_{\beta-1})$ for the pixels generated before block $\beta$. The joint distribution factors exactly as
	\begin{equation}
		\label{eq:factorization}
		p_{\btheta}(\by)=\prod_{\beta=1}^{B} p_{\btheta}\!\left(\by_{\beta}\,\big|\,\by_{<\beta}\right).
	\end{equation}
	The difference from a classical autoregressive model is where the conditional lives. Each factor $p_{\btheta}(\by_\beta|\by_{<\beta})$ is produced by one invocation of a small quantum circuit, and the conditioning history is compressed into a fixed-length sketch so that the circuit input does not grow with $d$.
	Throughout, $b$ denotes the block size in pixels and $B=\lceil d/b\rceil$ the number of blocks, i.e.\ the number of autoregressive steps.
	
	% ═══════════════════════════════════════════════════════
	\section{Architecture}\label{sec:architecture}
	% ═══════════════════════════════════════════════════════
	
	Figure~\ref{fig:outline} shows the pipeline unrolled. We now describe each stage in details.
	
	\subsection{Block decomposition}\label{sec:blocks}
	
	The image is partitioned into $B{=}\lceil d/b\rceil$ contiguous blocks of at most $b$ pixels. At $d{=}12$ with $b{=}2$ this gives $B{=}6$. At $d{=}25$ with $b{=}2$ it gives $B{=}13$ (the final block holds one pixel). The choice of $b$ is a trade-off between two modes. Large blocks starve the ridge decoder, which can produce at most $p_f$ independent output directions from $p_f$ measured features (Sec.~\ref{sec:ridge}). Small blocks lengthen the chain, so sketch distortion and model bias accumulate over more steps (Sec.~\ref{sec:noise}). In Sec.~\ref{sec:outlook} we quantify this trade-off and use it to set the block size at larger image sizes.
	
	\subsection{Sketch conditioning}\label{sec:sketch}
	
	Equation~\eqref{eq:factorization} requires block $\beta$ to be conditioned on all pixels generated before it. Feeding the raw prefix $\by_{<\beta}$ into the circuit would make the circuit input grow with $d$, which is exactly what we are trying to avoid. We therefore compress the prefix into a fixed-length vector.
	
	The compression is a count-sketch~\cite{charikar2002} into a fixed-length vector $\bs\in\R^{m}$, whose $m{=}32$ entries are accumulator buckets. The length $m$ is set once and does not grow with $d$, which is what keeps the circuit input 	independent of the image size. Two fixed random functions are drawn once at initialization and never changed. A hash $h:[d]\to[m]$ that assigns each pixel to one of $m$ buckets, and a sign $\varsigma:[d]\to\{\pm1\}$. After block $\beta$ is generated, the sketch is updated by adding each new pixel into its bucket with its sign,
	\begin{equation}
		\label{eq:sketch_update}
		s_j \mathrel{+}= \sum_{k\in\mathrm{block}\,\beta} \varsigma(k)\,\tilde{y}_k\,\mathbf{1}[h(k){=}j].
	\end{equation}
	
	%	In other words, the sketch is a fixed-length running summary of the shower so far, in which each pixel contributes to exactly one entry with a random sign. Randomized signs keep the summary unbiased, in the sense that inner products between sketched vectors estimate inner products between the original vectors (Appendix~\ref{app:sketch}). Before the sketch enters the circuit it passes through a fixed near-identity mixing layer $\tilde{\bs}{=}\tanh(\bs\bM^\top)$, with $\bM{=}\bI_m{+}\varepsilon\bm{E}$ for a fixed random $\bm{E}$ and $\varepsilon{\ll}1$, which spreads out the effect of hash collisions.

	In other words, the sketch is a fixed-length running summary of the shower so far, in which each pixel contributes to exactly one entry with a random sign. Randomized signs keep the summary unbiased, in the sense that inner products between sketched vectors estimate inner products between the original vectors (Appendix~\ref{app:sketch}). Before the sketch enters the circuit it passes through a fixed mixing layer, $\tilde{\bs}{=}\tanh(\bs\bM^\top)$. The matrix $\bM$ is a small perturbation of the identity, $\bM{=}\bI_m{+}\varepsilon\bm{E}$, where $\bI_m$ is the $m\times m$ identity, $\bm{E}$ is a random matrix drawn once at initialization and held fixed, and $\varepsilon{\ll}1$ sets the size of the perturbation. Remaining close to the identity preserves the bucket structure of the sketch, while the off-diagonal terms spread each hash collision across several circuit angles instead of concentrating it in one.
	
	We use a count-sketch rather than a learned embedding for three reasons. Its update costs $\calO(b)$ per block and $\calO(m)$ memory, both independent of how many pixels have already been generated. It adds no trainable parameters, so training remains a single-level problem in $\btheta$. And because it is a streaming summary, the teacher-forced sketches for every block can be precomputed once into a cache $\calC\in\R^{B\times N\times m}$ and reused at every training step.
	The hash and sign functions, the mixing matrix $\bM$, and are fixed at initialization. The iteratively trained quantities are the circuit parameters $\btheta$, the encoding $(\bA,\bm{b})$, and the per-block record shares $\rho_\beta$.
	
	\subsection{Parameterized quantum circuit}\label{sec:pqc}
	
	The circuit $U(\ba,\btheta)$ acts on $n_q{=}3$ qubits with $L{=}2$ layers at $d{=}12$ and $L{=}3$ at $d{=}25$ (Fig.~\ref{fig:circuit}). Each layer interleaves data re-uploading~\cite{dataReuploading} of the sketch-derived angles $\ba{=}\sigma(\tilde{\bs}\bA^\top{+}\bm{b})\in[0,1]^{L_a}$ with trainable $R_ZR_Y$ rotations, followed by a CZ ring $\mathrm{CZ}_{01},\mathrm{CZ}_{12},\mathrm{CZ}_{20}$.
	Here $\sigma$ is the elementwise logistic function, $\bA\in\R^{L_a\times m}$ and $\bm{b}\in\R^{L_a}$ are a trained projection and bias with $L_a{=}8$ at $d{=}12$ and $L_a{=}16$ at $d{=}25$.
	The trainable circuit parameter count is $p{=}2Ln_q$, giving 12 at $d{=}12$ and 18 at $d{=}25$, shared across all blocks. The encoding $(\bA,\bm{b})$ is trained jointly with $\btheta$ rather than held fixed.
	
	The encoding map, $\bA$, $\bm{b}$ and the logistic squashing, is an engineering choice. It is fixed, cheap, and keeps angles in a bounded range. A different fixed encoder of the same width is equally admissible.
	
	\begin{figure}[tb]
		\centering
		\resizebox{\columnwidth}{!}{%
			\begin{quantikz}[row sep={0.85cm,between origins}, column sep=0.4cm]
				\lstick{$\ket{0}$} & \gate{R_Y} & \gate{R_Z} & \gate{R_Z} & \gate{R_Y} & \ctrl{1} & \qw & \control{} & \gate{R_Y} & \gate{R_Z} & \gate{R_Z} & \gate{R_Y} & \ctrl{1} & \qw & \control{} & \meter{} \\
				\lstick{$\ket{0}$} & \gate{R_Y} & \gate{R_Z} & \gate{R_Z} & \gate{R_Y} & \control{} & \ctrl{1} & \qw & \gate{R_Y} & \gate{R_Z} & \gate{R_Z} & \gate{R_Y} & \control{} & \ctrl{1} & \qw & \meter{} \\
				\lstick{$\ket{0}$} & \gate{R_Y} & \gate{R_Z} & \gate{R_Z} & \gate{R_Y} & \qw & \control{} & \ctrl{-2} & \gate{R_Y} & \gate{R_Z} & \gate{R_Z} & \gate{R_Y} & \qw & \control{} & \ctrl{-2} & \meter{}
			\end{quantikz}
		}
		\vspace{1pt}
		\caption{Variational circuit $U(\ba,\btheta)$ for $n_q{=}3$, $L{=}2$. In each layer the first two gates per qubit encode the sketch angles $\ba$ (data re-uploading), the next two are trainable $R_ZR_Y$ rotations (12 parameters in total), and the last three columns form the CZ ring. The circuit is executed twice per block per sample, once in the $Z$ basis and once in the $X$ basis.}
		\label{fig:circuit}
	\end{figure}
	
	Sharing $\btheta$ across blocks is deliberate. The same physical processes govern each detector region, with only the local energy budget, carried by the sketch, changing between blocks. Sharing reduces the trainable circuit-parameter count from $pB$ to $p$ and keeps the cost of an exact gradient step independent of the chain length.
	
	\subsection{Born measurement records}\label{sec:pauli}
	
	This is the stage at which the present architecture differs most from a conventional variational feature model, so we state it carefully.
	
	The circuit is executed $k$ times per block per sample. Each execution returns a bitstring, and from each bitstring we form the parity vector of the measured observables. We call one such vector a \emph{record}. For the observable family
	\[
	\mathcal O=\{P_i\}_{i=1}^{n_q}\cup\{P_iP_j\}_{1\le i<j\le n_q},\qquad P\in\{Z,X,Y\},
	\]
	each of the $G{=}3$ tensor-product settings supplies $n_q+\binom{n_q}{2}$ parities from the same bitstring, giving
	\begin{equation}\label{eq:pf}
		p_f = G\left(n_q+\binom{n_q}{2}\right)=\tfrac{3}{2}n_q(n_q+1),
	\end{equation}
	which is $p_f{=}18$ at $n_q{=}3$.
	
	Write $\bm r_1,\ldots,\bm r_k\in\{-1,+1\}^{p_f}$ for the parity vectors of the $k$ records, and
	\begin{equation}\label{eq:fbar}
		\bar{\bm f}_\beta \;=\; \frac1k\sum_{j=1}^{k}\bm r_j \;\in\;[-1,1]^{p_f}
	\end{equation}
	for their average, which is the quantity the decoder receives. It is an unbiased estimator of the Pauli expectation vector, $\E[\bar{\bm f}_\beta]=\langle\mathcal O\rangle$, but it is used at finite $k$ and never replaced by that expectation. The distinction is the subject of the rest of this subsection and of the conditional-mean substitution.
	
	The decisive point is what is done with the records. A conventional model would average all $k$ records into an estimate of $\langle\mathcal O\rangle$ and decode the estimate, treating shot noise as an error to be suppressed. Here the finite-$k$ record average \emph{is} the model output. The block is decoded from a $k$-shot average whose fluctuations are Born randomness, and those fluctuations are the only source of stochasticity in the generated shower. There is no classical noise model, no residual sampler and no learned latent distribution anywhere in the pipeline. Consequently $k$ is a model parameter, not a precision knob. Increasing it does not merely reduce error, it changes the distribution being generated by narrowing the sampling noise.
	
\paragraph{Record sharing and intra-block correlation.} Pixels within one block are decoded from the same circuit invocation, and we control their statistical dependence explicitly. Of the $k$ records used for a block of $b$ pixels, a fraction $\rho_\beta$ is \emph{shared} by all pixels of the block and the remainder is drawn \emph{exclusively} per pixel. The two counts are
\begin{equation}\label{eq:share}
	k_{\mathrm{sh}}=\lfloor\rho_\beta k\rceil,\qquad k_{\mathrm{ex}}=k-k_{\mathrm{sh}},
\end{equation}
so that pixel $j$ is decoded from its own average $\bar{\bm f}^{(j)}_\beta$ formed over the $k_{\mathrm{sh}}$ shared records plus its own $k_{\mathrm{ex}}$ exclusive ones, and the circuit is queried $k_{\mathrm{sh}}+b\,k_{\mathrm{ex}}$ times in total. At $\rho_\beta{=}1$ all pixels of a block see identical records and their sampling noise is perfectly correlated. At $\rho_\beta{=}0$ their noise is independent. The share $\rho_\beta$ is a trainable parameter, one per block, optimized jointly with the circuit.

The dependence it produces is exact. Writing $\bm w_j$ for the $j$-th column of the decoder $\bW_\beta$ and $\bm\Sigma_\beta$ for the per-record parity covariance under the circuit's own output distribution, only the shared records enter both averages, so for two pixels of the same block
\begin{equation}\label{eq:noisecov}
	\mathrm{Cov}(\hat y_i,\hat y_j)=\frac{k_{\mathrm{sh}}}{k^{2}}\,
	\bm w_i^\top\bm\Sigma_\beta\bm w_j,
	\qquad
	\Var(\hat y_j)=\frac{1}{k}\,\bm w_j^\top\bm\Sigma_\beta\bm w_j ,
\end{equation}
and the correlation coefficient of the sampling noise is $k_{\mathrm{sh}}/k=\rho_\beta$. The trained share is therefore the noise correlation itself rather than a proxy for it. Measured at fixed conditioning over $8000$ regenerations, blocks with trained shares $0.673$, $0.626$, $0.260$ and $0.122$ give within-block correlations of $0.662$, $0.611$, $0.266$ and $0.114$. Measurement noise is thus shaped into physical intra-block correlation rather than merely tolerated, and Eq.~\eqref{eq:noisecov} also accounts for the conditional-mean row of Table~\ref{tab:certificates}, since $k\to\infty$ sends covariance and variance to zero together.

	\subsection{Noise-aware ridge decoder}\label{sec:ridge}
	
	Records are mapped to pixel values by ridge regression. Because the regressor input is a $k$-shot average rather than an exact expectation, the fit is corrected for the known sampling covariance,
	\begin{equation}\label{eq:ridge}
		\bW_\beta = \Bigl(\bF^\top\bF + \tfrac{1}{k}\bm{\Sigma}_\beta + \alpha\bI\Bigr)^{-1}\bF^\top\bm{Y}_\beta ,
	\end{equation}
	where $\bF$ collects the exact feature expectations and $\bm{\Sigma}_\beta$ is the summed parity covariance of the measured family under the circuit's own output distribution, both computable in closed form from the setting probabilities. The term $\bm{\Sigma}_\beta/k$ is the errors-in-variables correction for regressing on a noisy design matrix. Omitting it biases the decoder toward zero at small $k$.
	
	Two properties are worth noting. The solution is closed-form and therefore a smooth function of the circuit parameters, so the decoder tracks the circuit exactly during optimization and no bilevel problem arises. And $\bW_\beta$ maps $\R^{p_f}\to\R^{b}$, so its rank is at most $p_f$. Blocks larger than $p_f$ contain target directions the decoder cannot represent. At the operating point used here, $b{=}2$ and $p_f{=}18$, the decoder is far from that limit.
	
	\paragraph{Marginal calibration.} We report two variants throughout. The \emph{raw} pipeline is the one described above, with no post-processing of any kind. Every generated intensity is a decoded Born record average. The \emph{calibrated} pipeline additionally maps each generated pixel through the empirical quantile function of the corresponding training pixel, a rank-preserving copula-type construction in which the model supplies the dependence structure and the calibration supplies the marginals. Calibration substantially improves per-pixel metrics and leaves the rank correlations produced by the model untouched. Table~\ref{tab:certificates} lists both variants.
	
	\paragraph{Teacher forcing versus free running.}
	During training the sketch for block $\beta$ is built from ground-truth prefixes. During generation it is built from the model's own outputs. The analysis in Sec.~\ref{sec:noise} concerns measurement noise under the stated conditioning and does not by itself bound model-bias accumulation at large $B$.
	
	% ═══════════════════════════════════════════════════════
	\section{Training}\label{sec:training}
	% ═══════════════════════════════════════════════════════
	
	The trainable quantities are the circuit parameters $\btheta$, the fixed-width encoding $(\bA,\bm{b})$, and the per-block record shares $\{\rho_\beta\}$. All are optimized against a single objective with exact analytic gradients.
	
	\paragraph{Objective.} Because the model output is a $k$-shot average of Born records, its distribution is available in closed form through its characteristic function, and we match distributions directly rather than matching moments. Let $\varphi_i(\bm{w})$ denote the characteristic function of the decoded contribution of one record for sample $i$, evaluated at frequency $\bm{w}$. The $k$ records are independent given the conditioning, so the $k$-shot average has characteristic function $\varphi_i(\bm{w}/k)^k$, and the model and data characteristic functions at frequency $\bm{w}$ are
	\begin{equation}\label{eq:cf}
		\psi_{\mathrm{model}}(\bm{w})=\frac1n\sum_i \varphi_i\!\left(\tfrac{\bm{w}}{k}\right)^{k},
		\qquad
		\psi_{\mathrm{data}}(\bm{w})=\frac1n\sum_i e^{\,\mathrm{i}\,\bm{w}\cdot\by_{i,\beta}} .
	\end{equation}
	The loss is the squared discrepancy averaged over random Fourier frequencies $\bm{w}$ drawn from a multi-bandwidth family,
	\begin{equation}\label{eq:cfmmd}
		\calL_\beta=\frac{1}{n_w}\sum_{\bm{w}}\bigl|\psi_{\mathrm{model}}(\bm{w})-\psi_{\mathrm{data}}(\bm{w})\bigr|^{2},
	\end{equation}
	a characteristic-function maximum mean discrepancy. A fraction of the frequencies additionally carries a phase $e^{\mathrm{i}\bm{w}_c\cdot\by_{<\beta}}$ on the conditioning prefix, so that the objective matches the joint law of block and prefix rather than the block marginal alone. We use $n_w{=}1024$ frequencies and a conditioning fraction of $0.5$.
	
	Two features distinguish Eq.~\eqref{eq:cfmmd} from a sample-based discrepancy. It is exact at finite $k$. No sampling of the model is required to evaluate it, because $\varphi_i$ is computed from the circuit's setting probabilities. And it is differentiable in closed form, so gradients are analytic rather than estimated.
	
	\paragraph{Gradients.} Derivatives with respect to $\btheta$ follow from the parameter-shift rule applied to the setting probabilities, derivatives with respect to $(\bA,\bm{b})$ from the same rule applied to the encoding angles through the chain rule, and derivatives with respect to $\rho_\beta$ in closed form from Eq.~\eqref{eq:share}. The implicit dependence of the closed-form decoder $\bW_\beta$ on the parameters is included exactly rather than being treated as a constant. Every gradient path is verified against central finite differences to a relative error below $10^{-6}$.
	
	This replaces the stochastic-perturbation optimization of earlier versions of this architecture. The cost is that a gradient step is no longer two loss evaluations. Each of the $p$ circuit parameters requires a shifted pair of evaluations, so a step evaluates $\calO(pG n_b)$ circuits rather than $\calO(Gn_b)$. The benefit is that the gradient is exact and the optimizer is not limited by perturbation variance, which at this parameter count is the dominant consideration.
	
	\begin{theorem}[Per-step circuit count is independent of image size]\label{thm:cost}
		With $G$ tensor-product measurement settings, $p$ circuit parameters and minibatch size $n_b$, one exact gradient step for a block evaluates $2pGn_b$ circuits, independent of the image dimension $d$, the dataset size $N$, and the number of blocks $B$.
	\end{theorem}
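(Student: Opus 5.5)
The argument is a counting argument over the computation graph of one gradient step. The plan is to enumerate every quantity the gradient of $\calL_\beta$ in Eq.~\eqref{eq:cfmmd} depends on, identify which of them require a circuit execution, and count those executions. Everything else in the pipeline is classical post-processing of a fixed, finite set of setting probabilities: the sketch cache $\calC$, the ridge solve of Eq.~\eqref{eq:ridge}, the characteristic functions of Eq.~\eqref{eq:cf}, the random frequencies, and the shares $\rho_\beta$.

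\emph{First step.} For each minibatch sample $i$ the sketch angles $\ba$ are read from the precomputed cache. They therefore require no circuit calls and no access to the other $N-n_b$ samples or to other blocks. For fixed $(\ba_i,\btheta)$, the exact feature expectations $\bF$, the parity covariance $\bm\Sigma_\beta$, and the one-record characteristic function $\varphi_i$ are all closed-form functions of the $G$ setting distributions $\{|\langle x|U_g(\ba_i,\btheta)|0\rangle|^2\}_{g=1}^G$. This holds because each record is a deterministic parity map of a bitstring from one of the $G$ settings, and the $p_f$ observables of Eq.~\eqref{eq:pf} are read from the same bitstring. So the loss at the unshifted point is a classical function of $Gn_b$ circuit evaluations.

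\emph{Second step.} I would apply the parameter-shift rule to each of the $p$ entries of $\btheta$. Every trainable gate is a Pauli rotation $R_Z$ or $R_Y$, so each setting probability satisfies $\partial_{\theta_\ell}P=\tfrac12[P(\theta_\ell+\pi/2)-P(\theta_\ell-\pi/2)]$. The derivative of every downstream quantity then follows by the chain rule from these shifted probabilities. This includes the derivative of $\bW_\beta$, via $\partial(\cdot)^{-1}=-(\cdot)^{-1}\partial(\cdot)(\cdot)^{-1}$, and the derivative of $\varphi_i^k$. Each parameter contributes two shifted circuits per setting per sample, giving $2pGn_b$ in total.

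\emph{Third step.} The remaining trainable quantities are $\rho_\beta$ and the encoding $(\bA,\bm b)$. The gradient in $\rho_\beta$ is closed-form via Eq.~\eqref{eq:share} and reuses the probabilities already computed. For the encoding, the main obstacle is that its gradient seems to demand additional shifted circuits on the data-uploading angles. I would handle this by stating the count, as the theorem does, for the $p$ circuit parameters. The encoding-angle shifts would be noted as an additional $\calO(L_aGn_b)$ term that is likewise independent of $d$, $N$ and $B$. The same caveat applies to the unshifted evaluations, which the statement absorbs as lower order.

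\emph{Final step.} I would check that none of $d$, $N$ or $B$ enters the count. The circuit width $n_q$, and hence $p=2Ln_q$ and $G$, is fixed by the block size $b$ and not by $d$. The circuit $\btheta$ is shared across blocks, so no block-indexed copies of the parameters appear. Only $n_b$ samples enter the minibatch loss. Finally, the sketch has fixed length $m$, so only classical post-processing costs could depend on $d$, and those involve no circuit evaluations.
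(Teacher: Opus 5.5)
Your proposal is correct and uses the same counting argument as the paper: each of the $p$ parameters contributes a shifted pair of evaluations, each pair needs $G$ settings for each of the $n_b$ minibatch samples, and $d$ enters only through which block is selected and the fixed-width sketch. Your extra bookkeeping is a fair refinement that the paper's two-line proof passes over: the $Gn_b$ unshifted evaluations and the parameter-shift evaluations on the encoding angles for $(\bA,\bm b)$ add further circuits, also independent of $d$, $N$ and $B$, so $2pGn_b$ counts only the $\btheta$-gradient.
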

	\begin{proof}
		Each of the $p$ parameters contributes a shifted pair of evaluations, each requiring $G$ settings for each of the $n_b$ minibatch samples. None of these counts refers to $d$, $N$ or $B$. The image dimension enters only through which block is selected, and the sketch that conditions it has fixed width $m$.
	\end{proof}
	
	The practical consequence is unchanged from the design goal. Enlarging the image adds blocks to \emph{generation} but leaves the cost of an individual training step fixed.
	
	\begin{algorithm}
		\KwData{Data $\bm{Y}$, teacher-forced sketch cache $\calC$, backend}
		\KwIn{Adam step sizes for $\btheta$, $(\bA,\bm{b})$ and $\{\rho_\beta\}$; shots per block $k$; frequency count $n_w$}
		Draw random Fourier frequencies once per block\;
		\For{$t=0$ \KwTo $T-1$}{
			minibatch $\mathcal{I}$ of size $n_b$\;
			\ForEach{block $\beta$}{
				$\bs\leftarrow\calC[\beta,\mathcal{I}]$\;
				$P\leftarrow$ setting probabilities of $U(\ba(\bs),\btheta)$ \tcp*[r]{$Gn_b$ circuits}
				$\bW_\beta\leftarrow$ noise-aware ridge, Eq.~\eqref{eq:ridge}\;
				$\calL_\beta,\nabla\calL_\beta\leftarrow$ CF-MMD and exact gradients, Eq.~\eqref{eq:cfmmd}\;
			}
			Adam update of $\btheta$, $(\bA,\bm{b})$, $\{\rho_\beta\}$ from $\sum_\beta\nabla\calL_\beta$\;
		}
		\Return{$\btheta,\bA,\bm{b},\{\rho_\beta\}$}
		\caption{QFAN training with exact CF-MMD gradients}\label{alg:train}
	\end{algorithm}
	
	% ═══════════════════════════════════════════════════════
	\section{Shot-Noise Propagation}\label{sec:noise}
	% ═══════════════════════════════════════════════════════
	
	Because each block is conditioned on sketches built from previously generated pixels, measurement noise entering at one step is carried forward. We bound this accumulation under a deliberately pessimistic model, in which per-block errors are allowed to add coherently rather than averaging down. The resulting statement, together with its proof, is given in Appendix~\ref{app:noise}. Here we state only the consequence.
	
	Writing $\kappa$ for a bound on the per-block decoder gain, which ridge regularization controls through $\|\bW_\beta\|_F\le\|\bm{Y}_\beta\|_F/(2\sqrt{\alpha})$, and $k$ for the records per block, the expected end-to-end sketch perturbation after $B$ blocks satisfies a bound that grows at most linearly in $B$ and decreases as $k^{-1/2}$. Requiring a fixed end-to-end signal-to-noise ratio then yields a sufficient shot budget scaling as
	\begin{equation}\label{eq:shots}
		k \;=\; \calO\!\left(d^{2}\,p_f\right),
	\end{equation}
	which should be read as a worst-case envelope, not as a prediction of the shots actually required. It is worst-case in two specific ways. It assumes coherent (rather than incoherent) error addition across blocks, and it uses the loosest available bound on the decoder gain. In our experiments $k{=}64$ suffices at both $d{=}12$ and $d{=}25$, far below what Eq.~\eqref{eq:shots} would demand, which is the expected behavior of a pessimistic envelope and not evidence in its favor.
	
	The analysis bounds the propagation of measurement noise. Model-bias propagation under free-running generation is a separate mechanism, addressed in Sec.~\ref{sec:next}.
	
	% ═══════════════════════════════════════════════════════
	\section{Execution Paths}\label{sec:dual}
	% ═══════════════════════════════════════════════════════
	
	To separate algorithmic behavior from hardware effects we run identical circuit logic on two backends. A noiseless Aer simulator~\cite{qiskit_aer} and IBM \texttt{ibm\_fez}, a 156-qubit Heron r2 processor~\cite{ibm_fez}. The 3-qubit circuit needs at most one SWAP to realize the wrap-around $\mathrm{CZ}_{20}$ on the heavy-hex topology.
	
	Training is performed on the simulator. The hardware path executes the trained circuit to draw measurement records, with the decoder refitted on device-measured features to absorb the linear part of the channel response. Hardware and simulator therefore differ in device noise and in decoder calibration, not in optimization budget.
	
	\begin{table}[t]
		\centering
		\caption{Execution parameters and measured performance. Training is performed on the simulator. The hardware path executes the trained circuit to draw measurement records, with the decoder refitted on device-measured features. Uncertainties are 95\% bootstrap intervals. The hardware runs use $n{=}200$ generated samples against a marginal-fidelity floor of $0.0069$, the simulator runs $n{=}1200$ against a floor of $0.0032$.}
		\label{tab:paths}
		\small
		\begin{tabular}{@{}lcc@{}}
			\toprule
			& \textbf{Simulator} & \textbf{\texttt{ibm\_fez}} \\
			\midrule
			$k$ (records/block) & 64 & 64 \\
			$G$ (settings) & 3 & 3 \\
			$p_f$ & 18 & 18 \\
			Gradients & exact & --- (inference only) \\
			Decoder & train-fit & refit on device features \\
			\midrule
			\multicolumn{3}{@{}l}{\emph{$d{=}12$, $B{=}6$}}\\
			$\bar W_1$ & 0.0092 [0.0084, 0.0108] & 0.0203 [0.0144, 0.0264] \\
			off-diag & 0.0925 [0.0779, 0.1208] & 0.1775 [0.1415, 0.2258] \\
			Spearman off & 0.0929 [0.0773, 0.1135] & 0.1861 [0.1478, 0.2410] \\
			$W_1^{(E)}$ & 0.0440 [0.0381, 0.0530] & 0.0891 [0.0474, 0.1326] \\
			\midrule
			\multicolumn{3}{@{}l}{\emph{$d{=}25$, $B{=}13$}}\\
			$\bar W_1$ & 0.0128 [0.0117, 0.0144] & 0.0185 [0.0125, 0.0249] \\
			off-diag & 0.1282 [0.1113, 0.1467] & 0.2244 [0.1797, 0.2836] \\
			Spearman off & 0.1165 [0.1001, 0.1361] & 0.1937 [0.1547, 0.2454] \\
			$W_1^{(E)}$ & 0.1152 [0.0997, 0.1313] & 0.2924 [0.1756, 0.4171] \\
			\bottomrule
		\end{tabular}
	\end{table}
	
	% ═══════════════════════════════════════════════════════
	\section{Results}\label{sec:results}
	% ═══════════════════════════════════════════════════════
	
	\begingroup
	\captionsetup[subfigure]{labelformat=empty}
	\begin{figure*}[htb]
		\centering
		\subfloat[Pixel 0]{\includegraphics[width=0.24\textwidth]{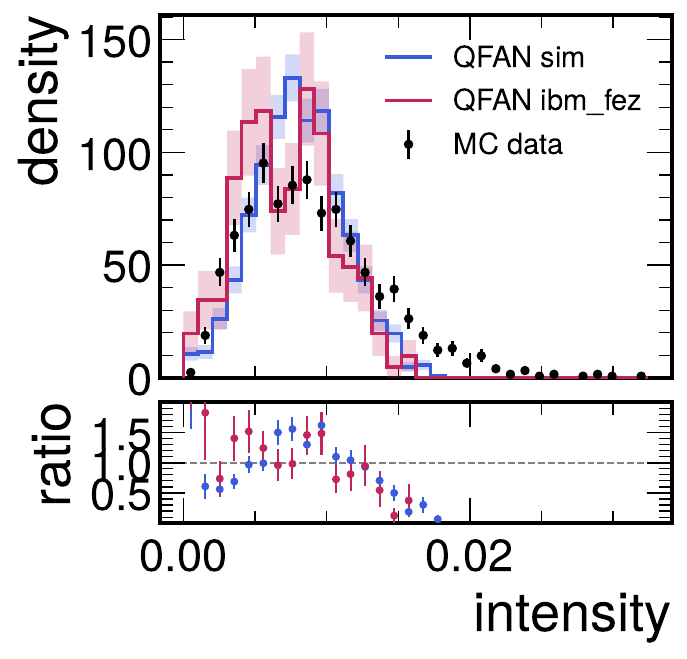}}\hfill
		\subfloat[Pixel 1]{\includegraphics[width=0.24\textwidth]{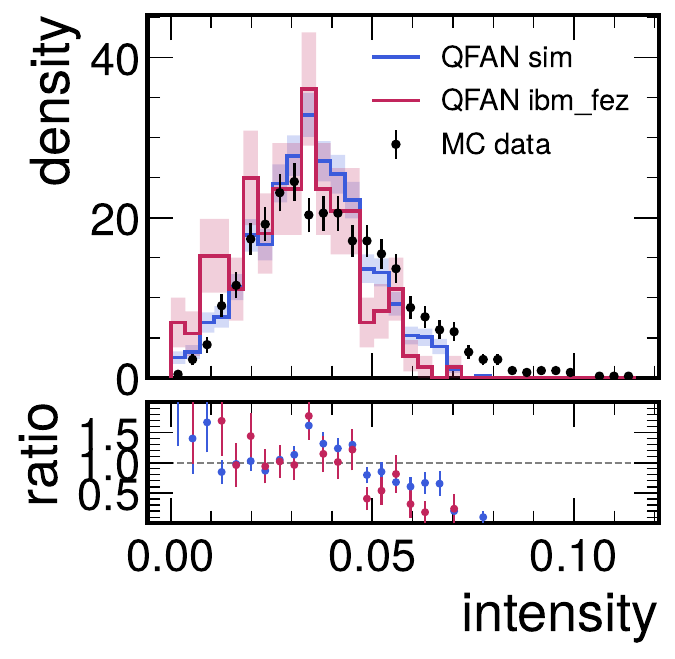}}\hfill
		\subfloat[Pixel 2]{\includegraphics[width=0.24\textwidth]{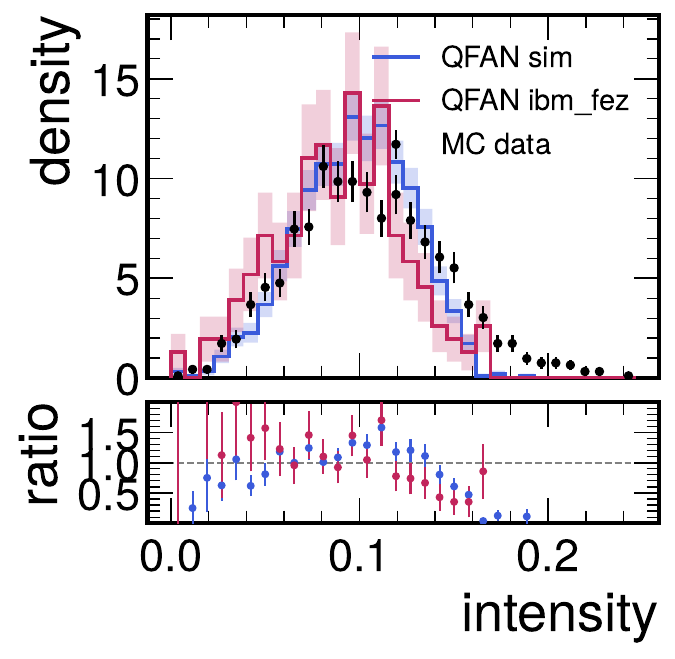}}\hfill
		\subfloat[Pixel 3]{\includegraphics[width=0.24\textwidth]{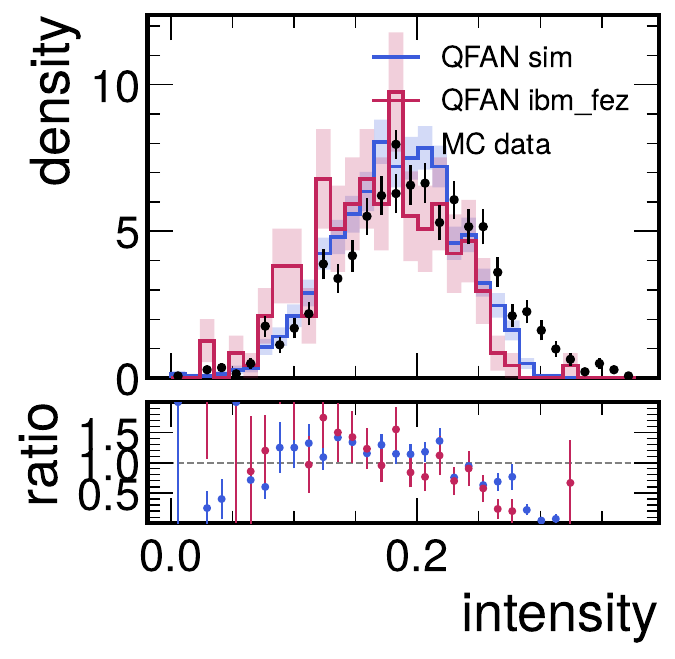}}\hfill
		\subfloat[Pixel 4]{\includegraphics[width=0.24\textwidth]{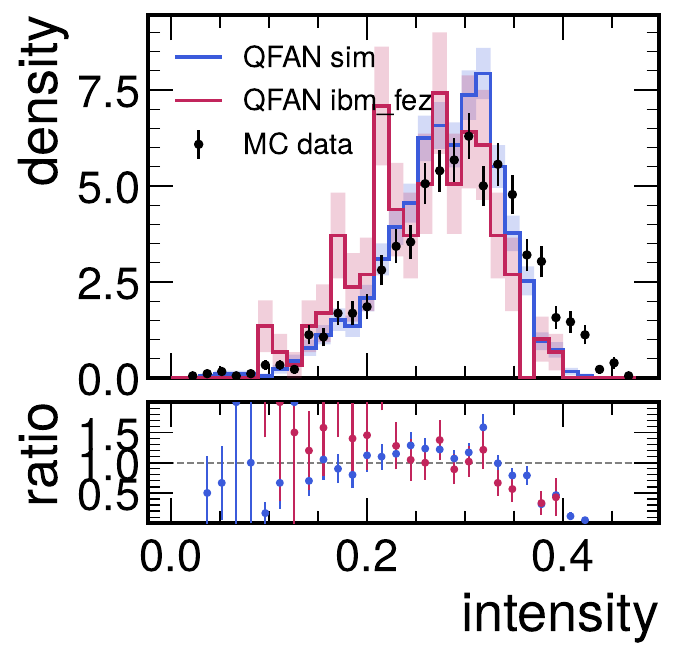}}\hfill
		\subfloat[Pixel 5]{\includegraphics[width=0.24\textwidth]{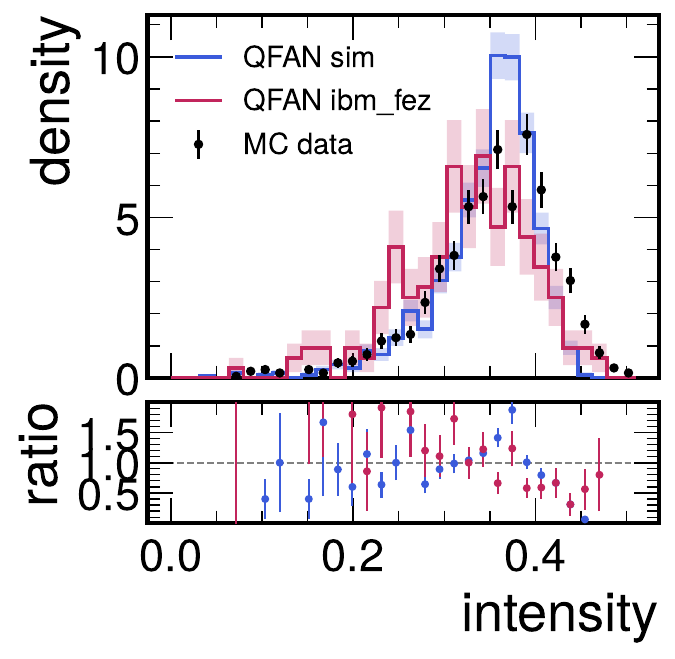}}\hfill
		\subfloat[Pixel 6]{\includegraphics[width=0.24\textwidth]{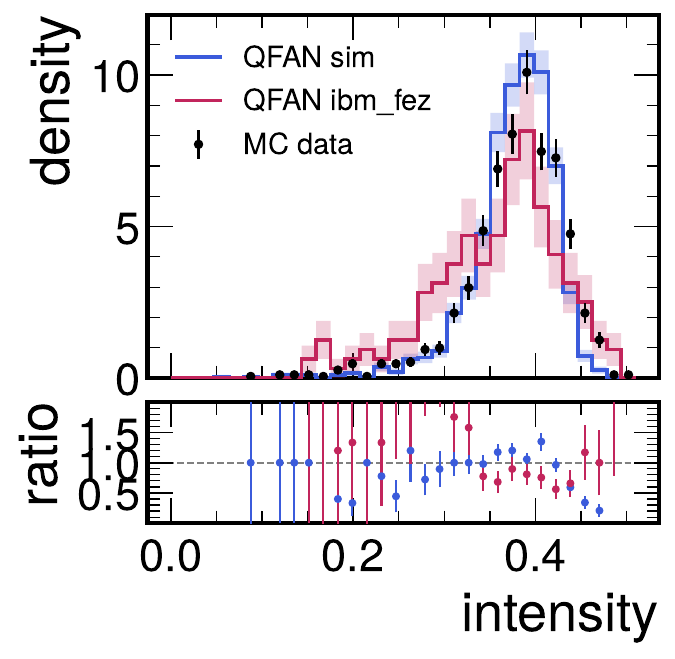}}\hfill
		\subfloat[Pixel 7]{\includegraphics[width=0.24\textwidth]{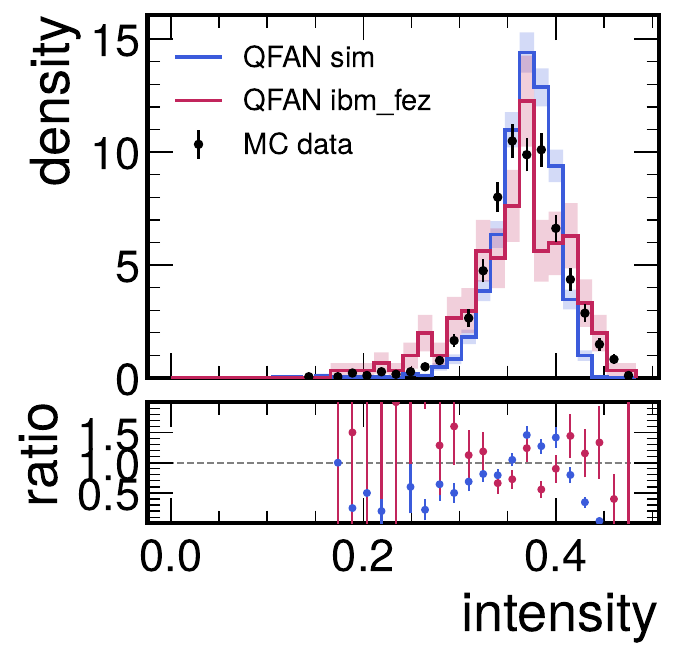}}\hfill
		\subfloat[Pixel 8]{\includegraphics[width=0.24\textwidth]{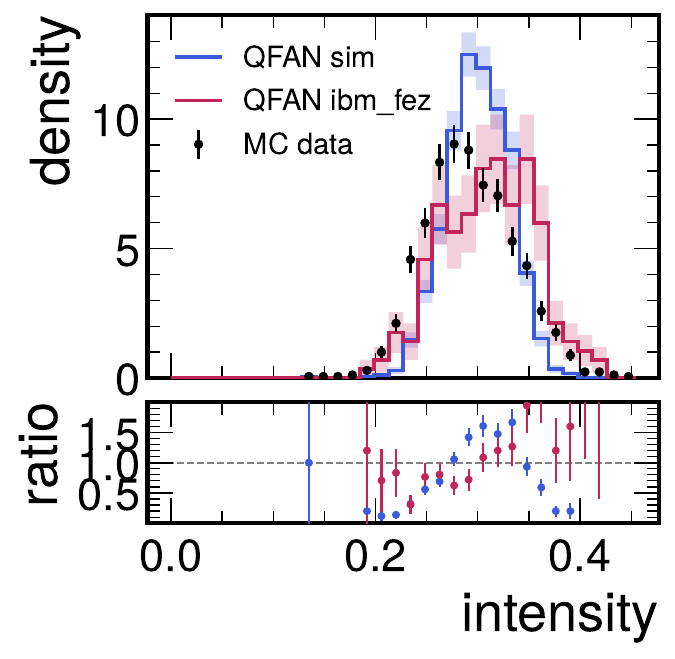}}\hfill
		\subfloat[Pixel 9]{\includegraphics[width=0.24\textwidth]{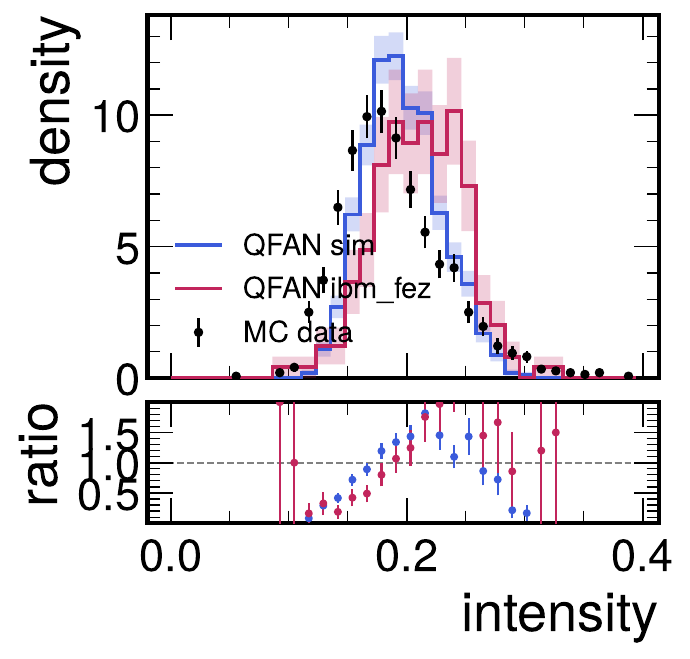}}\hfill
		\subfloat[Pixel 10]{\includegraphics[width=0.24\textwidth]{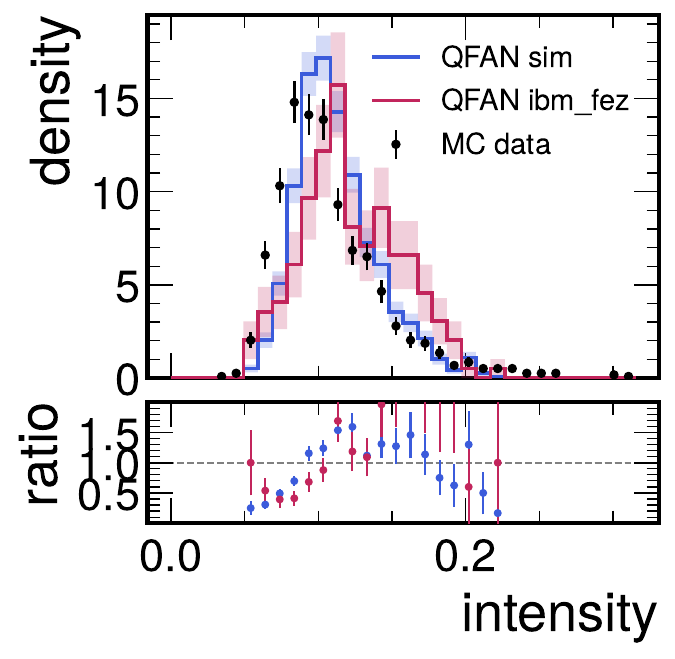}}\hfill
		\subfloat[Pixel 11]{\includegraphics[width=0.24\textwidth]{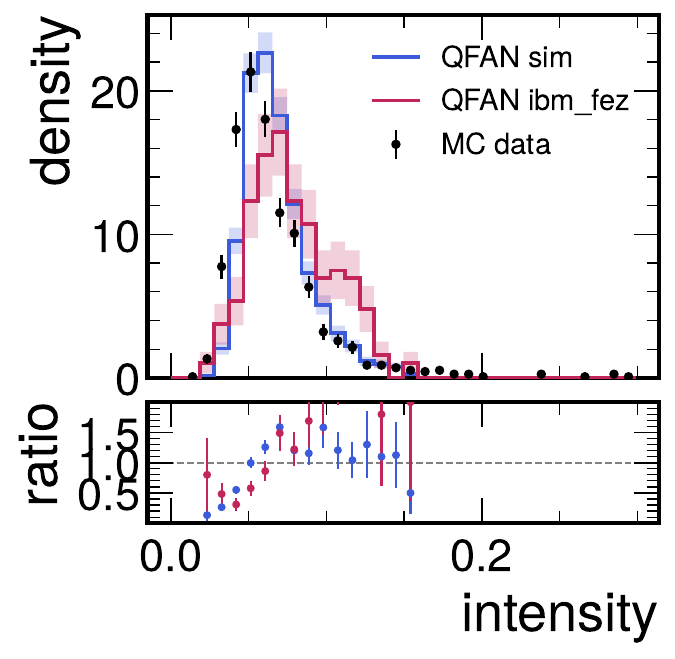}}
		\caption{Per-pixel marginal intensity distributions at $d{=}12$ (variable intensity scale). MC truth (black points), QFAN on the simulator (blue) and QFAN on \texttt{ibm\_fez} (red), with model/MC ratio panels below. The peak position shifts systematically from low intensity at pixel~0 to high intensity at pixel~5, reflecting longitudinal shower development. Curves are the calibrated pipeline. See Sec.~\ref{sec:ridge} and Table~\ref{tab:certificates}.}
		\label{fig:marginals}
	\end{figure*}
	\endgroup
	
	The Monte Carlo (MC) truth is the CLIC electromagnetic shower dataset downsampled to 12 pixels~\cite{clic_ds}. MC truth, simulator QFAN and hardware QFAN are compared on a common held-out set of 1000 images with three metrics of increasing difficulty, all lower-is-better. Per-pixel Wasserstein-1 distance (marginal fidelity), inter-pixel Pearson correlation error (joint structure) and the total deposited energy distribution (global consistency). Shared hyperparameters are $n_q{=}3$, $d{=}12$, $b{=}2$, $B{=}6$, $m{=}32$, $k{=}64$ records per block, $G{=}3$ settings, $p_f{=}18$, and $N{=}4800$ training and $1200$ test images. The $d{=}25$ run uses $L{=}3$ and 18 circuit parameters with $b{=}2$, $B{=}13$.
	
	\subsection{Per-pixel marginals}
	
	Figure~\ref{fig:marginals} shows the twelve marginals. Each is unimodal and positively skewed, with a peak that moves to higher intensity through block~1 and a similar gradient in block~2, reflecting longitudinal shower development. Both QFAN paths follow the MC shapes across all pixels, and the ratio panels stay near unity over the bulk of each distribution.
	
	Table~\ref{tab:w1_pixels} lists the per-pixel Wasserstein-1 distances. The simulator mean is $\bar{W}_1{=}0.0092$ and the hardware mean $0.0203$, both well above the $n$-dependent statistical floor ($0.0032$ and $0.0069$ respectively), so the hardware degradation is resolved rather than being a finite-sample effect. The per-pixel error varies by an order of magnitude across the image on both backends, from $0.0013$ to $0.0149$ on the simulator and from $0.0020$ to $0.0355$ on hardware, and the two profiles are strongly correlated. The pixels that are hardest for the simulator are the hardest for the device. The error tracks the local intensity scale rather than the block structure, since with $b{=}2$ there are six block boundaries and the largest errors instead concentrate in the high-intensity interior (pixels 3 to 6). The hardware degradation is therefore roughly multiplicative in the per-pixel error rather than being localized at the points where information must cross between circuit invocations.
	
	These marginal numbers include the calibration of Sec.~\ref{sec:ridge}. Table~\ref{tab:certificates} gives the raw counterparts, and the correlation results below are unaffected by it.
	
	\begin{table}[b]
		\centering
		\caption{Per-pixel Wasserstein-1 distances at $d{=}12$, raw (uncalibrated) pipeline. Statistical floors set by the finite evaluation sample are $0.0032$ for the simulator ($n{=}1200$) and $0.0069$ for the hardware ($n{=}200$). All hardware entries except pixel~0 lie above their floor, pixel~0 being consistent with no resolvable deviation. The error tracks the local intensity scale rather than the block structure. With $b{=}2$ there are six block boundaries, and the largest errors on both backends fall in the high-intensity interior (pixels 3 to 6) rather than at boundaries.}
		\label{tab:w1_pixels}
		\small
		\begin{tabular}{@{}ccc@{}}
			\toprule
			\textbf{Pixel} & \textbf{simulator} & \textbf{\texttt{ibm\_fez}} \\
			\midrule
			0 & 0.00130 & 0.00204 \\
			1 & 0.00442 & 0.00934 \\
			2 & 0.00809 & 0.01742 \\
			3 & 0.01490 & 0.03413 \\
			4 & 0.01305 & 0.03553 \\
			5 & 0.01092 & 0.03129 \\
			6 & 0.00670 & 0.02698 \\
			7 & 0.00901 & 0.01150 \\
			8 & 0.01358 & 0.01815 \\
			9 & 0.01293 & 0.02302 \\
			10 & 0.00859 & 0.01766 \\
			11 & 0.00711 & 0.01610 \\
			\midrule
			Mean & 0.00922 & 0.02026 \\
			Median & 0.00880 & 0.01790 \\
			Max & 0.01490 & 0.03553 \\
			\bottomrule
		\end{tabular}
	\end{table}
	
	\subsection{Correlation structure}
	
	\begin{figure*}[tb]
		\centering
		\subfloat[MC truth]{\includegraphics[width=0.32\textwidth]{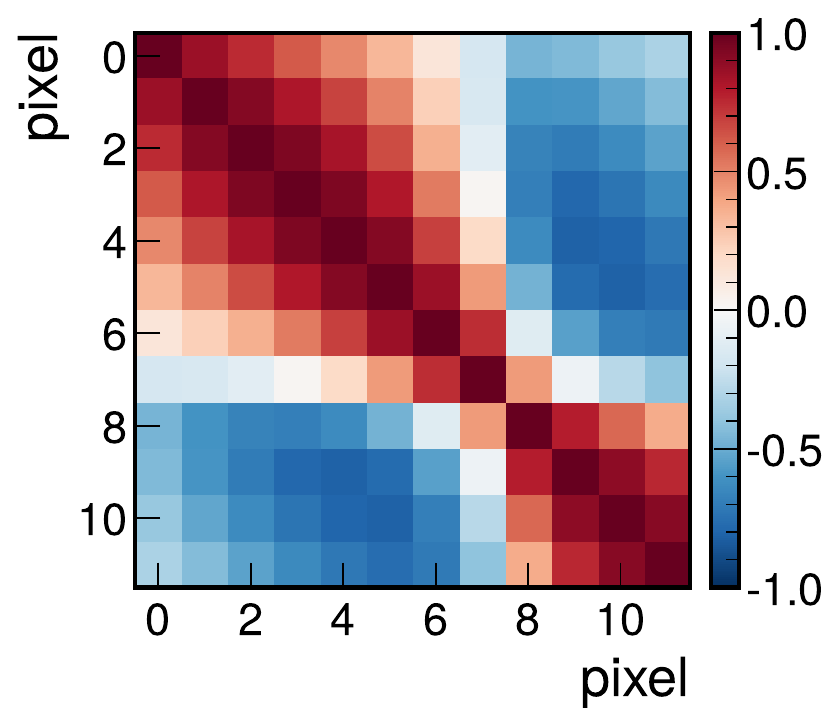}}\hfill
		\subfloat[QFAN (simulator)]{\includegraphics[width=0.32\textwidth]{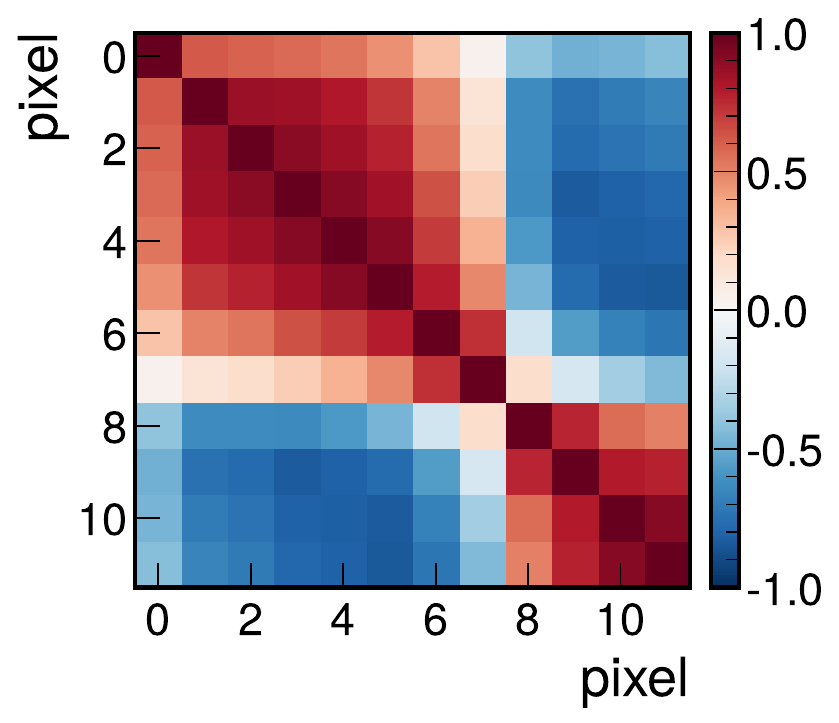}}\hfill
		\subfloat[QFAN (\texttt{ibm\_fez})]{\includegraphics[width=0.32\textwidth]{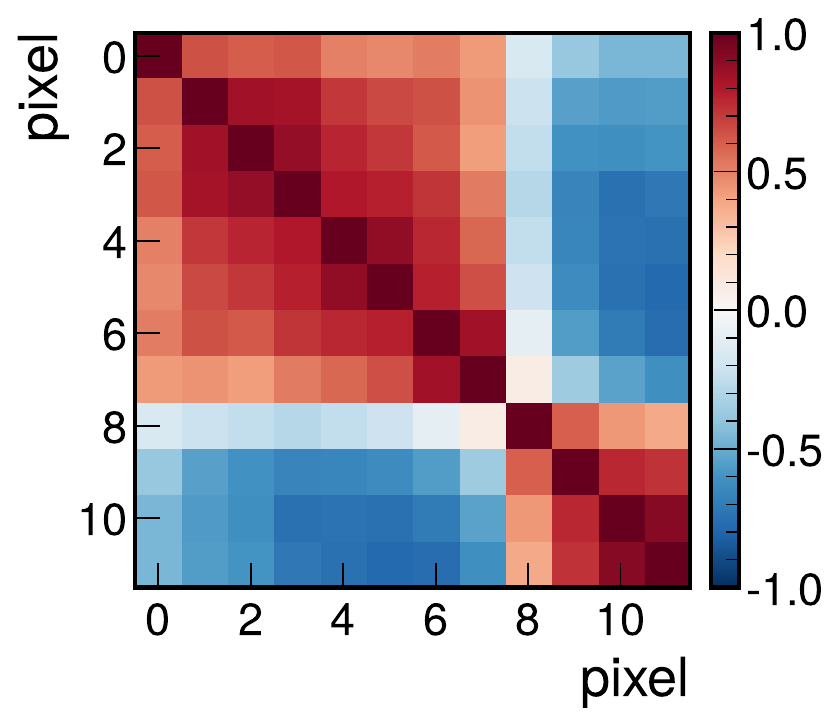}}
		\caption{Pearson correlation matrices at $d{=}12$ representing respectively, the MC truth (a), simulator QFAN (b), hardware QFAN (c). Both models reproduce the strong positive intra-block correlations and the negative cross-block correlations imposed by energy conservation. The residuals are largest at the block boundaries, where inter-block information passes only through the sketch.}
		\label{fig:corr}
	\end{figure*}
	
	Reproducing marginals is necessary but not sufficient. A model could match every marginal exactly and still generate statistically independent pixels. The correlation matrix is therefore the direct test of whether the circuit and the sketch together capture the joint distribution, and, because the marginal calibration is rank-preserving, it is also the part of the result the calibration cannot manufacture.
	
	Figure~\ref{fig:corr} shows the $12\times12$ Pearson matrices. The MC matrix has two clear signatures. Strong positive correlations within the early detector region and large negative correlations between the two halves of the shower, from the energy-conservation constraint that couples them. Both QFAN paths reproduce this structure, including the sign pattern of the cross-block quadrant. 
	
	That the errors concentrate at the boundary has a direct architectural reading. Within a block, pixel correlations are produced by the entanglement of a single circuit invocation. Across a block boundary, they must be carried by the $m$-dimensional sketch, which is a lossy summary. The boundary is thus the point of maximum information loss in the architecture, and it is where the model is weakest. Importantly, the boundary error does not grow along the chain. At $d{=}25$ with $B{=}13$ (Appendix~\ref{app:25px}) the error at the last boundary is comparable to that at the first, which is the behavior one would want if longer chains are to be feasible. This is an observation at $B{=}5$, not a demonstration that it persists at $B\sim10^2$.
	
	\subsection{Total energy}
	
	\begin{figure}[b]
		\centering
		\includegraphics[width=\columnwidth]{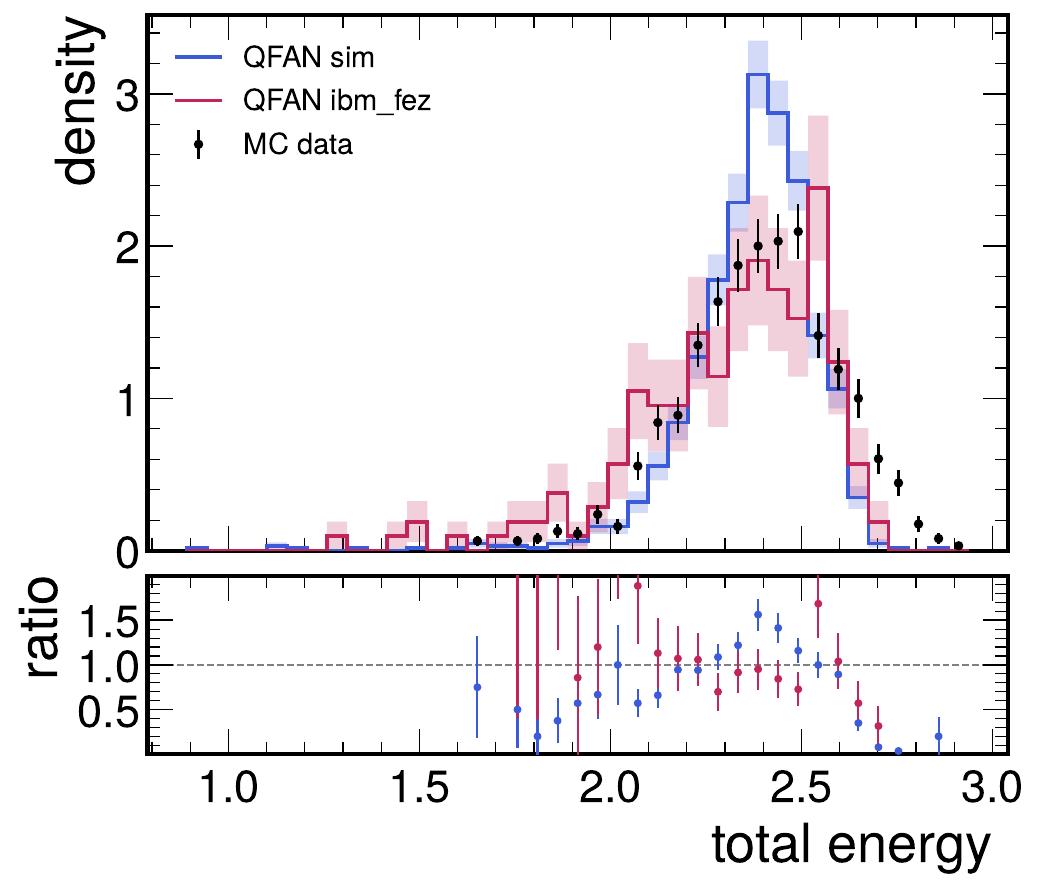}
		\caption{Total energy $E{=}\sum_j y_j$ at $d{=}12$. MC truth (black points), simulator QFAN (blue), hardware QFAN (red), with ratio panel. Both paths reproduce the peak position and width, indicating that the sketch carries the energy budget from block~1 into block~2.}
		\label{fig:energy}
	\end{figure}
	
	The total deposited energy $E{=}\sum_j y_j$ aggregates both blocks and is therefore the most sensitive single observable to a failure of inter-block conditioning. If the sketch carried no useful energy information, the two blocks would be sampled almost independently and the variance of $E$ would be the sum of the block variances, giving a visibly broader distribution than MC.
	
	Figure~\ref{fig:energy} shows that this does not happen. Both QFAN paths reproduce the peak position and the width. Quantitatively, the simulator gives $W_1^{(E)}{=}0.0440$ and the hardware $W_1^{(E)}{=}0.0891$. These are larger than typical per-pixel values because a small coherent per-pixel bias adds up across twelve pixels in the sum. They test a stricter global consistency condition rather than being on the same numerical scale as the single-pixel distances.
	
	We note one interpretive caution that applies to this observable in particular. The peak position of $E$ is largely fixed by the per-pixel means, which the classical decoder reproduces well on its own. The informative feature is the \emph{width} of the distribution, since $\Var(E)=\sum_{jk}\mathrm{Cov}(y_j,y_k)$ depends on precisely the inter-pixel covariances that the sketch and the circuit have to supply. A model with correct marginals and no inter-pixel correlation would show the correct peak and a wrong width. We therefore recommend interpreting Fig.~\ref{fig:energy} through its width and ratio panel, and we make this test explicit in Sec.~\ref{sec:certificates}.
	
	% ═══════════════════════════════════════════════════════
	\section{Discussion}\label{sec:certificates}
	% ═══════════════════════════════════════════════════════
	
	The model is generated by the trained circuit and by the statistics of its measurements. We establish this by removing one element of the pipeline at a time and refitting the remainder on the same training data, so that the surviving stages absorb what they can. Table~\ref{tab:certificates} reports the study at $d{=}12$ and $d{=}25$, and Table~\ref{tab:whatisquantum} lists which stages are trained.
	
	\begin{table*}[t]
		\centering
		\caption{Component removal at $d{=}12$ ($B{=}6$) and $d{=}25$ ($B{=}13$). Each row removes one component and refits all remaining stages on the same training data. ``off'' is the mean absolute error of the off-diagonal Pearson correlation matrix against the test set, ``$\bar W_1$'' the mean per-pixel Wasserstein-1 distance. The marginal-fidelity floor set by finite test statistics is $\bar W_1{=}0.0020$ ($d{=}12$) and $0.0016$ ($d{=}25$).}
		\label{tab:certificates}
		\small
		\begin{tabular}{@{}llcccccc@{}}
			\toprule
			& & \multicolumn{3}{c}{$d{=}12$, $B{=}6$} & \multicolumn{3}{c}{$d{=}25$, $B{=}13$} \\
			\cmidrule(lr){3-5}\cmidrule(lr){6-8}
			Configuration & Component removed & $\bar W_1$ & off & within & $\bar W_1$ & off & within \\
			\midrule
			QFAN, raw (simulator) & --- & 0.0092 & 0.0925 & 0.0526 & 0.0128 & 0.1282 & 0.0752 \\
			QFAN, calibrated & --- & 0.0033 & 0.0526 & 0.0224 & 0.0021 & 0.0279 & 0.0180 \\
			QFAN on \texttt{ibm\_fez} & --- & 0.0203 & 0.1775 & 0.1113 & 0.0185 & 0.2244 & 0.0781 \\
			\midrule
			Conditional means & Born measurement randomness & 0.0334 & \multicolumn{2}{c}{\textbf{degenerate}} & 0.0341 & \multicolumn{2}{c}{\textbf{degenerate}} \\
			Classical features& quantum feature map & 0.0113 & 0.1411 & 0.0996 & 0.0102 & 0.1165 & 0.1429 \\
			Scrambled sketch & prefix conditioning & 0.0155 & 0.4936 & 0.2234 & 0.0253 & 0.5198 & 0.5334 \\
			Untrained circuit & training of $\btheta$ & 0.0220 & 0.5153 & 0.2239 & 0.0197 & 0.5156 & 0.5400 \\
			Plain least squares & noise-aware decoding & 0.0877 & 0.3339 & 0.1926 & 0.0817 & 0.3365 & 0.1810 \\
			\bottomrule
		\end{tabular}
	\end{table*}
	
	\paragraph{The sampling step.} Replacing each block's $k$-shot record average by its exact conditional expectation removes the sampling fluctuation and changes nothing else. The rollout then becomes a deterministic map from the fixed initial sketch, producing one distinct sample among $1200$ and zero variance in every pixel at both image sizes, against total-energy variances of $0.031$ and $0.17$ in the data. Correlation is undefined for a constant, so those columns are marked \emph{degenerate}. The measurement fluctuation is the generative mechanism of the model, and the shared-record parameter $\rho_\beta$ of Eq.~\eqref{eq:share} shapes it into inter-pixel correlation.
	
	\paragraph{The training.} Freezing $\btheta$ at its random initialization and refitting the classical stages leaves a mean absolute off-diagonal correlation of $0.08$ at $d{=}12$ and $0.04$ at $d{=}25$, against data values of $0.58$ and $0.54$, and per-pixel $\bar W_1$ at roughly seven and six times the statistical floor where the trained model sits at three and four. The untrained circuit reproduces neither the correlations nor the marginals. Training the shared circuit produces both, and its $12$ to $18$ parameters are the only quantities in the pipeline that adapt to the data.
	
	\paragraph{The conditioning.} Scrambling the sketch destroys the prefix information while preserving its statistics, and the correlation error rises to $0.494$ and $0.520$. The autoregressive conditioning therefore contributes directly to the correlation structure.
	
	\paragraph{The noise-aware decoder.} Decoding with a plain least-squares fit, which ignores the sampling covariance of the $k$-shot average, raises the correlation error to $0.334$ at $d{=}12$ and $0.337$ at $d{=}25$. Omitting the errors-in-variables term of Eq.~\eqref{eq:ridge} biases the decoder toward zero, and the attenuation propagates along the chain. Both image sizes land at the same value, which identifies a systematic bias rather than a scale-dependent effect.
	
	\paragraph{The feature map.} Substituting a classical feature map of the same width, with the sampling and decoding machinery unchanged, gives $0.141$ against $0.093$ at $d{=}12$ and $0.117$ against $0.128$ at $d{=}25$.
	
	\paragraph{Calibration.} The rank-preserving marginal map improves $\bar W_1$ by roughly sevenfold at $d{=}25$ and changes the correlation error by $0.02$. Per-pixel accuracy in the calibrated variant is therefore substantially a property of the calibration, while the correlation numbers are a property of the model in both variants.
	
	\begin{table}[t]
		\centering
		\caption{Stage-by-stage accounting. The circuit is the only stochastic element. All generative randomness is Born measurement randomness.}
		\label{tab:whatisquantum}
		\small
		\begin{tabular}{@{}llll@{}}
			\toprule
			Stage & Quantum & Trained & Role \\
			\midrule
			Sketch (hash, sign, $\bM$) & no & fixed & compress prefix \\
			Angle projection $(\bA,\bm{b})$ & no & \textbf{yes} & encode \\
			Circuit $U(\ba,\btheta)$ & \textbf{yes} & \textbf{yes} & generate \\
			Measurement records & \textbf{yes} & --- & \textbf{all randomness} \\
			Record share $\rho_\beta$ & --- & \textbf{yes} & intra-block corr. \\
			Noise-aware ridge & no & closed form & decode \\
			Marginal calibration & no & post hoc & marginals only \\
			\bottomrule
		\end{tabular}
	\end{table}
	
	Taken together, these locate the generative content of the model in two places. The sampling fluctuation supplies the randomness, and the training of the shared circuit supplies both the per-pixel spectra and the inter-pixel structure. The classical stages compress the prefix, decode the records and map the marginals, and they carry no trainable capacity beyond a closed-form solve. This is the proof of principle the architecture was built to demonstrate at $d{=}12$ and $d{=}25$.

	\subsection{Register size and classical simulability}\label{sec:simulability}
	
	Ten qubits suffice for images with $\calO(10^4)$ pixels under the capacity rule, and circuits of that width are classically simulable. This subsection establishes the conditions under which a physical device becomes necessary.
	
	%	\paragraph{Where the present circuit sits.} The circuit used here is classically easy by every relevant measure, and by a wide margin. Its state is a $2^3$-dimensional vector. Its depth is $L{=}2$ with ring connectivity, so the entanglement across any cut is bounded by a small constant and a matrix-product-state simulation is exact with a tiny bond dimension \emph{at any width}. Its observables have weight at most two, so under backward Heisenberg evolution through two layers they spread to weight $\lesssim 6$ and Pauli-propagation methods evaluate them with a polynomial number of terms~\cite{paulipropagation,begusic2024}. Nothing in this work is evidence of quantum advantage. The hardware runs demonstrate portability of the circuit logic, not performance.
	
	\paragraph{Simulability of the present circuit.} The circuit is three qubits wide, two layers deep at $d{=}12$ and three at $d{=}25$, with ring connectivity and observables of weight at most two. At this size classical methods reproduce it directly. The state is a $2^3$-dimensional vector, the entanglement across any cut is bounded by a small constant so a matrix-product-state simulation is exact at small bond dimension, and the measured observables spread to weight $\lesssim 6$ under backward Heisenberg evolution, which Pauli-propagation methods evaluate with a polynomial number of terms~\cite{paulipropagation,begusic2024}. This is what makes the simulator path of Sec.~\ref{sec:dual} an exact reference against which the hardware runs are compared. 
	
	\paragraph{Register size within the present architecture.} The register size is set by the block size through $p_f{=}\tfrac32 n_q(n_q{+}1)$ (Eq.~\eqref{eq:pf}) and $b\le p_f/\rho_{\min}$, and the block size is set by how short one needs the autoregressive chain to be, $B{=}\lceil d/b\rceil$. Larger registers are therefore required precisely when short chains are required. Table~\ref{tab:whenquantum} makes this explicit at CaloChallenge Dataset~3 scale ($d{=}40\,500$). Tolerating $B{\sim}500$ steps needs only $n_q{=}10$, whereas restricting the chain to $B{\lesssim}20$ steps needs $n_q{\approx}45$, beyond exact state-vector simulation.
	
	Increasing $n_q$ at fixed depth and fixed observable weight~2 moves the model out of state-vector range while leaving it comfortably inside the range of tensor-network and Pauli-propagation surrogates, which are insensitive to width at this depth. A regime in which the \emph{feature map} is classically hard requires all three of width, depth, and observable weight to grow together. Enough qubits that state-vector simulation fails, enough depth that the Heisenberg-evolved observables spread over a finite fraction of the register, and observables whose support is not confined to a few qubits.
	
	Raising the observable weight to $w$ gives $p_f\sim(2n_q)^w/w!$ features and therefore \emph{reduces} the qubits needed for a given block size, pushing back toward the classically easy corner. Width and observable complexity cannot be increased independently in this architecture. Second, and more seriously for a near-term device, the same noise that makes deep circuits hard to execute also makes them easier to simulate. For noisy circuits the Pauli coefficients decay and truncated surrogate methods become accurate~\cite{angrisani2024}. A QFAN feature map that is simultaneously hard to simulate and faithfully executable would therefore require error rates well below what we used here, and plausibly error correction. This characterises the resource scaling.
	
	\begin{table}[t]
		\centering
		\caption{Register size required at CaloChallenge Dataset~3 scale ($d{=}40\,500$) as a function of the autoregressive chain length one is willing to tolerate, under the capacity rule of Eq.~\eqref{eq:bmax}. Memory is for an exact single-precision state vector. Short chains force large registers. At the depth and observable weight used here, tensor-network and Pauli-propagation surrogates remain efficient at every entry in this table, so register width is one of three conditions rather than a sufficient one.}
		\label{tab:whenquantum}
		\small
		\begin{tabular}{@{}rrrrl@{}}
			\toprule
			$B$ & $b$ & $p_f$ & $n_q$ & state vector \\
			\midrule
			553 & 73 & 110 & 10 & $<1$\,MB \\
			94 & 433 & 650 & 25 & 268\,MB \\
			49 & 840 & 1260 & 35 & 275\,GB \\
			20 & 2053 & 3080 & 55 & 288\,PB \\
			10 & 4108 & 6162 & 78 & infeasible \\
			\bottomrule
		\end{tabular}
	\end{table}
	
	\paragraph{Sampling rather than estimation.} A variational model queries its circuit for few-body expectation values. QFAN queries it for measurement records, and a finite-$k$ average of those records is the model output, so the object computed is a \emph{sample} from the circuit's output distribution rather than an estimate of an observable. Section~\ref{sec:certificates} shows that the distinction is structural, since replacing the samples by the corresponding expectation values makes the generator deterministic.
	
	This matters for the simulability question because the two tasks have different classical status. Expectation values of weight $\le2$ observables on a shallow circuit are efficiently computable by surrogate methods. Sampling from the output distribution of even a constant-depth circuit is not believed to be classically tractable in general, and is the basis of several quantum-advantage arguments~\cite{bremner2011}, as well as the standard setting of quantum Born machines used as generative models~\cite{benedetti2019}. QFAN therefore sits in the class of algorithm for which a quantum-advantage argument could in principle be made, which is a stronger position than an expectation-value feature model occupies.
	
	At the present scale this is a structural statement rather than a performance one. Sampling from a 3-qubit, depth-3 circuit is classically immediate, since one computes the $2^3$ output probabilities and draws from them, which is what the simulator path does. The hardness results for sampling apply asymptotically and to registers larger than the capacity rule demands at calorimeter scale. The architecture supplies the structural precondition. The quantity the device produces is a sample rather than an expectation value, and the register is no longer consumed by holding the image, so available qubits can be spent on the sampling task itself.
	
	\paragraph{What sets the required register size.} The two conditions above are linked by the scaling of sample fidelity with chain length. Large registers are demanded when the chain is kept short, and short chains are required when fidelity degrades with $B$. At $n_q{\sim}10$ the capacity rule accommodates benchmark geometries with long chains, while a fidelity constraint that caps $B$ drives the register into the regime of Table~\ref{tab:whenquantum}. Measuring sample fidelity as a function of $B$ at intermediate image sizes therefore fixes the register size the architecture needs, and is a simulator experiment.
	
	\subsection{Weight-2 features and block size}
	
	Two further removals support the architectural choices of Sec.~\ref{sec:architecture}. Dropping the weight-2 Pauli operators halves $p_f$ and roughly doubles both error metrics at no saving in circuit count, since $G{=}2$ either way (Table~\ref{tab:ab_w2}). Sweeping the block size (Table~\ref{tab:ab_block}) shows a shallow optimum at $b{=}6$. At $b{=}12$ performance degrades sharply, and at $b{=}3$ it degrades mildly because the longer chain accumulates more sketch distortion.
	
	At $d{=}12$, setting $b{=}12$ forces $B{=}1$, which removes the autoregressive conditioning. That row therefore combines two effects, decoder capacity and loss of conditioning.
	
	\begin{table}[t]
		\centering
		\caption{Effect of weight-2 Pauli features at $d{=}12$.}
		\label{tab:ab_w2}
		\small
		\begin{tabular}{@{}lccc@{}}
			\toprule
			Config & $p_f$ & $\bar{W}_1$ & $\|\Delta\bm{C}\|_F/d$ \\
			\midrule
			Weight-1 only & 6 & 0.031 & 0.068 \\
			Weight-1$+$2 & 12 & 0.017 & 0.032 \\
			\bottomrule
		\end{tabular}
	\end{table}
	
	\begin{table}[b]
		\centering
		\caption{Block-size sweep at $n_q{=}3$, $d{=}12$. \textbf{Measured under the earlier variant of this architecture} (simultaneous-perturbation stochastic training~\cite{spall1992}, plain ridge decoding, classical residual sampler), which is the only configuration for which a block-size sweep exists. The $b{=}12$ row also forces $B{=}1$, which removes autoregressive conditioning.}
		\label{tab:ab_block}
		\small
		\begin{tabular}{@{}ccccc@{}}
			\toprule
			$b$ & $B$ & $\rho{=}p_f/b$ & $\bar{W}_1$ & $\|\Delta\bm{C}\|_F/d$ \\
			\midrule
			3 & 4 & 4.0 & 0.022 & 0.042 \\
			4 & 3 & 3.0 & 0.019 & 0.035 \\
			6 & 2 & 2.0 & 0.017 & 0.032 \\
			12 & 1 & 1.0 & 0.045 & 0.120 \\
			\bottomrule
		\end{tabular}
	\end{table}
	
	% ═══════════════════════════════════════════════════════
	\section{Resource Outlook}\label{sec:outlook}

	This section collects what the measurements imply for larger geometries. The figures in it are projections from a single demonstration scale.
	
	\paragraph{Width at fixed block size.} Training the same pipeline at $n_q\in\{2,3,4,5\}$ with the block size, data, schedule and epoch budget held fixed isolates the effect of register width. The off-diagonal correlation error falls monotonically, from $0.1445{\pm}0.0030$ at $n_q{=}2$ to $0.1368{\pm}0.0015$, $0.1298{\pm}0.0042$ and $0.1065{\pm}0.0018$ at $n_q{=}5$, a reduction of $26\%$ across the range, with uncertainties taken over three generation seeds. The trend is closest	to linear in the measured feature count, with slope $-0.00103$ per feature and
	$R^{2}{=}0.95$, and it shows no flattening at its upper end. Over this range the accuracy of the model is set by the width of the register rather than by the decoder
	or by the length of the chain.

	\paragraph{Hardware transfer.} The trained model runs on \texttt{ibm\_fez} at both image sizes with a measurable but bounded loss of fidelity (Table~\ref{tab:paths}).
	The off-diagonal correlation error changes from $0.093$ to $0.178$ at $d{=}12$ and from $0.128$ to $0.224$ at $d{=}25$, roughly a factor of two in both cases, and the
	bootstrap intervals of the simulator and hardware runs do not overlap. Marginal fidelity is affected less, by about a factor of two at $d{=}12$ and by $45\%$ at
	$d{=}25$, in both cases remaining well above the $n{=}200$ statistical floor of $0.0069$. That the correlation metric moves more than the marginal metric is the
	expected signature of a channel that damps measured parities. The decoder refit absorbs the linear part of that damping and restores per-pixel scale, while the
	residual affects the inter-pixel structure that only the records carry.
	
	Two features of the hardware runs are worth reporting because they were not anticipated. First, the device outperforms its own noise model. At $d{=}25$ the
	\texttt{ibm\_fez} run gives off-diagonal error $0.224$ whereas an Aer simulation using that backend's reported noise parameters gives $0.425$, a factor of $1.9$ worse. Part
	of this gap is a confound, since the noise-model run used $k{=}32$ records against $k{=}64$ on the device, but the direction of the discrepancy indicates that the
	calibration-derived noise model is pessimistic for this circuit. Second, the hardware over-disperses the total energy, $\Var(E){=}0.54$ against a data value of $0.17$ and a
	simulator value of $0.076$, whereas the simulator under-disperses. Device noise therefore does not simply attenuate the model. It injects additional variance into the
	autoregressive chain, which the sum observable accumulates, and this is a sharper diagnostic of hardware effects than the correlation error alone.
	
	\paragraph{Decoder capacity and chain length.} Since $\bW_\beta$ has rank at most $p_f$, the ratio $\rho\equiv p_f/b$ must exceed $1$ for the decoder to span the block. Taking a practical floor $\rho_{\min}\approx1.5$ gives
	
	\begin{align}\label{eq:bmax}
		b_{\max}(n_q)&=\left\lfloor\frac{3n_q(n_q+1)}{2\rho_{\min}}\right\rfloor, \text{and, }
		\\
		B_{\min}(d,n_q)&=\left\lceil\frac{2d\,\rho_{\min}}{3n_q(n_q+1)}\right\rceil .
	\end{align}
	Because $p_f$ grows quadratically in $n_q$, each added qubit reduces the required number of blocks roughly quadratically, which is the main reason modest increases in register size are valuable here.
	
	The value $\rho_{\min}{\approx}1.5$ is interpolated from a single block-size sweep at $d{=}12$ (Table~\ref{tab:ab_block}), measured under the earlier variant of this
	architecture, between a working point $\rho{=}2$ and a failing point $\rho{=}1$ that simultaneously forces $B{=}1$. The present configuration operates at $b{=}2$ and
	therefore $\rho{=}p_f/b{=}9$, far above the rank floor, so the capacity constraint is inactive throughout this work and $\rho_{\min}$ enters only through the extrapolation
	of Table~\ref{tab:bmax}. Repeating the sweep under the present architecture at an image size with $d>p_f$, where $b$ can be large while $B>1$, fixes those projections
	directly.
	
	\begin{table}[t]
		\centering
		\caption{Projected block counts under the heuristic of Eq.~\eqref{eq:bmax}, with
			$p_f{=}3n_q(n_q{+}1)/2$ and $\rho_{\min}{=}1.5$. These are extrapolations from a
			single measured operating point.}
		\label{tab:bmax}
		\small
		\begin{tabular}{@{}ccccc@{}}
			\toprule
			$n_q$ & $p_f$ & $b_{\max}$ & $B(d{=}6480)$ & $B(d{=}40500)$ \\
			\midrule
			3  & 18  & 12  & 540 & 3375 \\
			5  & 45  & 30  & 216 & 1350 \\
			6  & 63  & 42  & 155 &  965 \\
			8  & 108 & 72  &  90 &  563 \\
			10 & 165 & 110 &  59 &  369 \\
			\bottomrule
		\end{tabular}
	\end{table}
	
	\paragraph{Scope of the projections.} Table~\ref{tab:bmax} gives of order $10^2$ autoregressive steps for a 6480-voxel geometry at 8 qubits, and by
	Theorem~\ref{thm:cost} the per-step QPU cost of training is independent of $d$. These are resource statements. Sample fidelity at that chain length is governed by exposure
	bias and is quantified by the measurement described in Sec.~\ref{sec:next}.

	% ═══════════════════════════════════════════════════════
	\section{Comparison with Prior Work}\label{sec:comparison}
	% ═══════════════════════════════════════════════════════
	
	\begin{table*}[bt]
		\centering
		\caption{Architectural comparison with prior work. This table is qualitative. The listed models differ substantially in output representation, degree of classical assistance, and evaluation scale, so a single matched metric would be misleading.}
		\label{tab:comparison}
		\small
		\renewcommand{\arraystretch}{1.25}
		\setlength{\tabcolsep}{5pt}
		\begin{tabular}{@{}p{2.8cm}p{1.9cm}p{2.2cm}p{2.5cm}p{2.3cm}p{2.6cm}@{}}
			\toprule
			\textbf{Model} & \textbf{QPU or QA tested} & \textbf{Quantum role} & \textbf{Output representation} & \textbf{Register scaling issue} & \textbf{Representative demonstrated scale} \\
			\midrule
			\textbf{QFAN}
			& \checkmark
			& shared feature generator
			& blockwise direct pixels
			& mitigated by autoregression
			& $d{=}12$ and $d{=}25$ \\
			\textbf{Dual-PQC qGAN}~\cite{chang2021_dualpqcgan}
			& Sim
			& direct image generator
			& reduced-size direct pixels
			& tied to image output
			& reduced-size calorimeter images \\
			\textbf{Full qGAN}~\cite{rehm2023_fullqgan}
			& Sim / small-device study
			& direct image generator
			& downsized direct pixels
			& tied to image output
			& downsized 8-pixel showers \\
			\textbf{CaloQVAE}~\cite{caloqvae2024}
			& QA-assisted
			& latent-space sampler
			& hybrid latent model
			& shifted to latent model
			& calorimeter surrogate \\
			\textbf{Cond.\ QA surrogate}~\cite{toledomarin2025}
			& QA-assisted
			& conditioned latent-space sampler
			& conditioned hybrid latent model
			& shifted to latent model
			& CaloChallenge Dataset 2 \\
			\bottomrule
		\end{tabular}
	\end{table*}
	
	Table~\ref{tab:comparison} places QFAN among early direct quantum calorimeter proof-of-concepts and more recent quantum-assisted generators. The comparison is architectural rather than benchmark-level, since these models differ in whether they emit downscaled pixels directly or work through hybrid latent constructions. For QFAN, the ``12 shared parameters'' figure refers only to the trainable circuit parameters. The full model also contains the classical sketch, decoder and calibration stages accounted for in Table~\ref{tab:whatisquantum}.
	Related quantum generative work in HEP beyond shower generation includes qGAN-based anomaly detection~\cite{bermot2023_anomaly}.
	
	\section{Outlook}\label{sec:next}
	
	The demonstrations reported here are at $d{=}12$ and $d{=}25$ with three qubits, and the architecture admits several direct extensions from that base.
	
	The scaling of sample fidelity with chain length $B$ is the quantity that fixes the register size the method requires at benchmark geometries. Measuring it at intermediate image sizes is a simulator experiment and sets the operating point for everything downstream. A block-size sweep under the present pipeline, at an image size large enough that $b$ can be large while $B>1$, determines $\rho_{\min}$ directly and sharpens the projections of Table~\ref{tab:bmax}. A simulator run at the hardware optimization budget resolves the two contributions to the transfer of Sec.~\ref{sec:dual} into their separate parts. Comparison against a trained classical encoder at equal feature count extends the study of Sec.~\ref{sec:certificates}.
	
	Beyond these, the natural direction is to widen the register. Each additional qubit raises $p_f$ quadratically and shortens the chain accordingly, and the per-step training cost of Theorem~\ref{thm:cost} is unchanged by it. Whether the additional width also improves sample fidelity at fixed block size is measurable with the same machinery used here.

	\section{Conclusion}\label{sec:conclusion}
	
	QFAN decouples the number of qubits from the image size. Instead of one qubit per pixel, one small circuit is reused across consecutive blocks, conditioned each time on a compact summary of the pixels generated so far, so that adding pixels adds autoregressive steps rather than qubits. For the observable family measured here the number of circuits per training step is independent of the image size, and the block-size trade-off links qubit count to the required chain length.
	
	At $d{=}12$ and $d{=}25$ the architecture reproduces per-pixel spectra, inter-pixel correlations and total deposited energy on a noiseless simulator and on \texttt{ibm\_fez}, using three qubits and 12 to 18 shared parameters. Removing the measurement fluctuation, or the training of the circuit, removes both the correlations and the marginals, which places the generative content of the model in the trained circuit and in the statistics of its measurements. Widening the register from two to five qubits at fixed block size reduces the correlation error by $26\%$ with no sign of saturation.
	
	The construction is a proof of principle at these image sizes rather than a competitive surrogate. What it establishes is that the register no longer has to hold the image, which is the precondition for spending qubits on anything else.
	
	\section*{Code and Data Availability}\label{sec:code}
	
	The code implementing the architecture, the training pipeline and all analyses is available at \url{https://github.com/jamalslim/qfan_project}.
	
	\begingroup
	\captionsetup[subfigure]{labelformat=empty}
	\begin{figure*}[htb]
		\centering
		\subfloat[Pixel 0]{\includegraphics[width=0.19\textwidth]{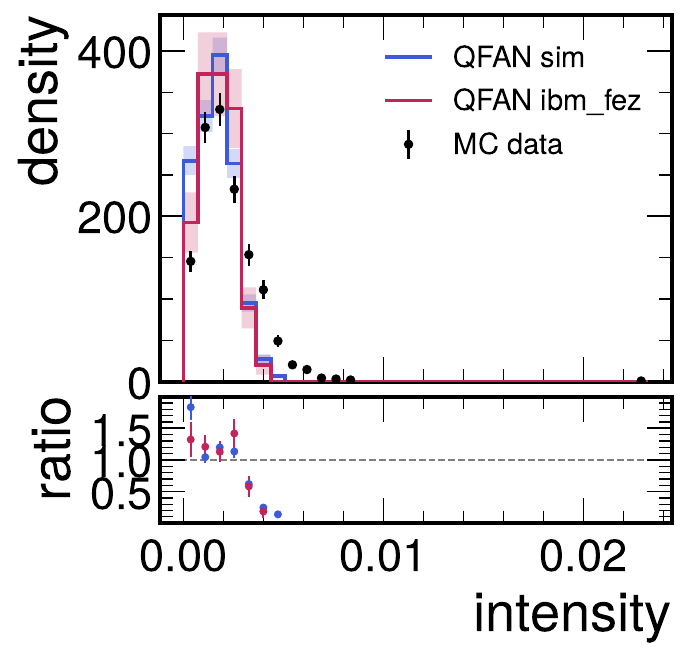}}\hfill
		\subfloat[Pixel 1]{\includegraphics[width=0.19\textwidth]{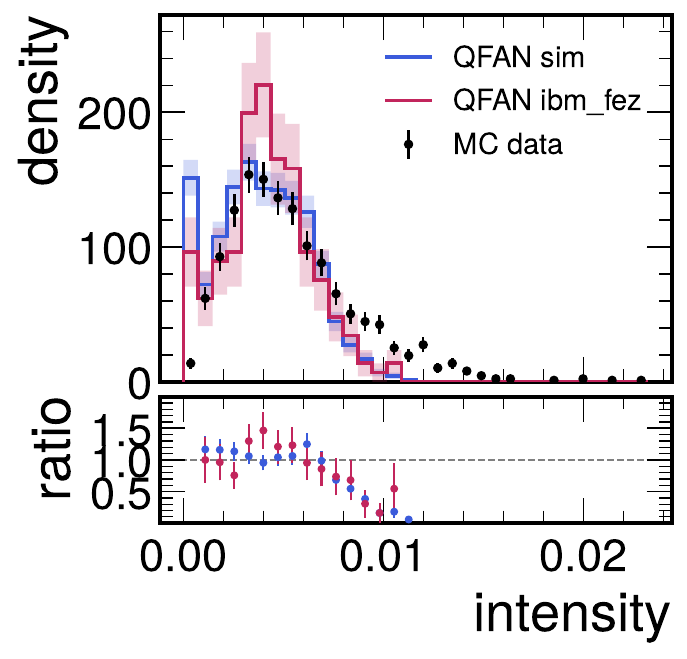}}\hfill
		\subfloat[Pixel 2]{\includegraphics[width=0.19\textwidth]{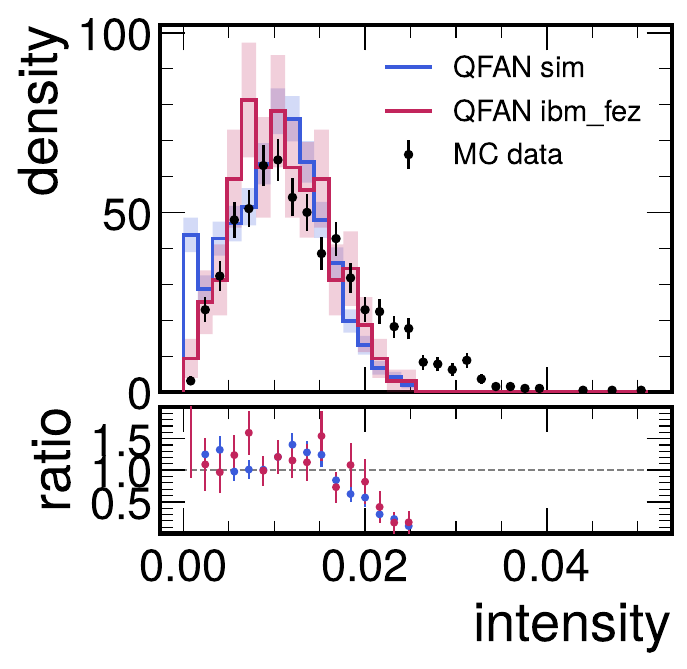}}\hfill
		\subfloat[Pixel 3]{\includegraphics[width=0.19\textwidth]{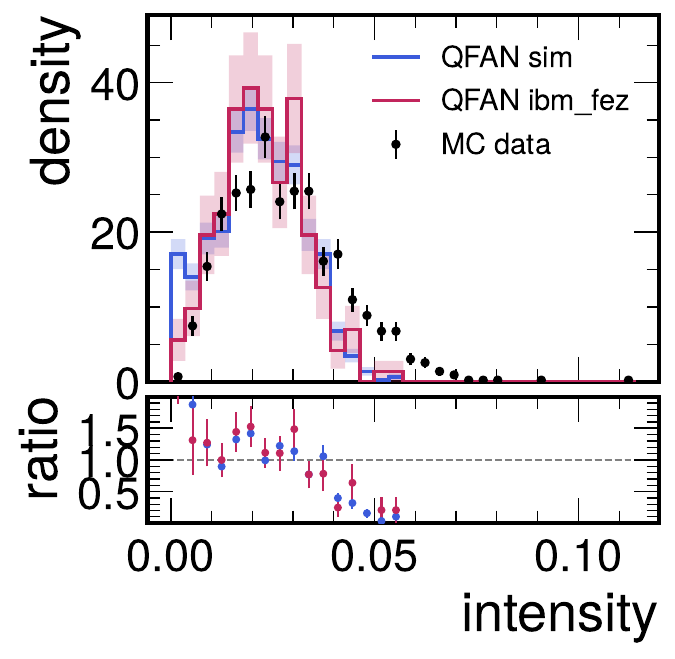}}\hfill
		\subfloat[Pixel 4]{\includegraphics[width=0.19\textwidth]{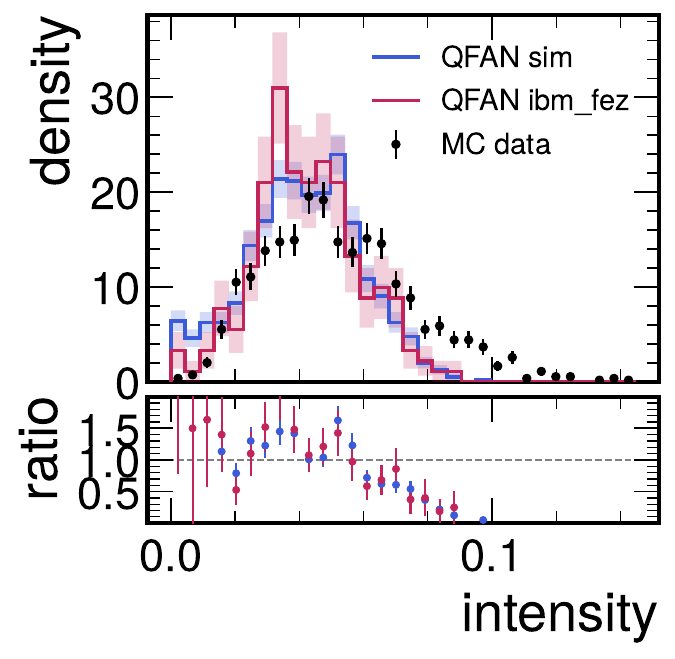}}\\
		\subfloat[Pixel 5]{\includegraphics[width=0.19\textwidth]{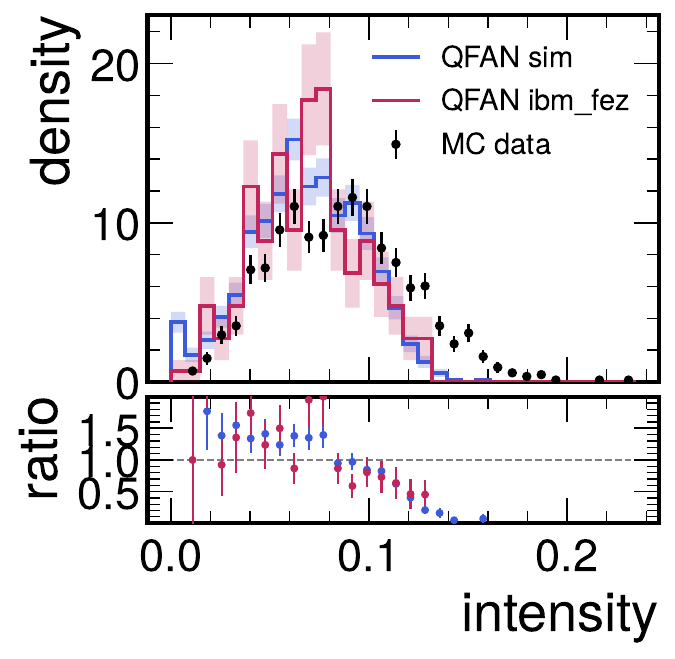}}\hfill
		\subfloat[Pixel 6]{\includegraphics[width=0.19\textwidth]{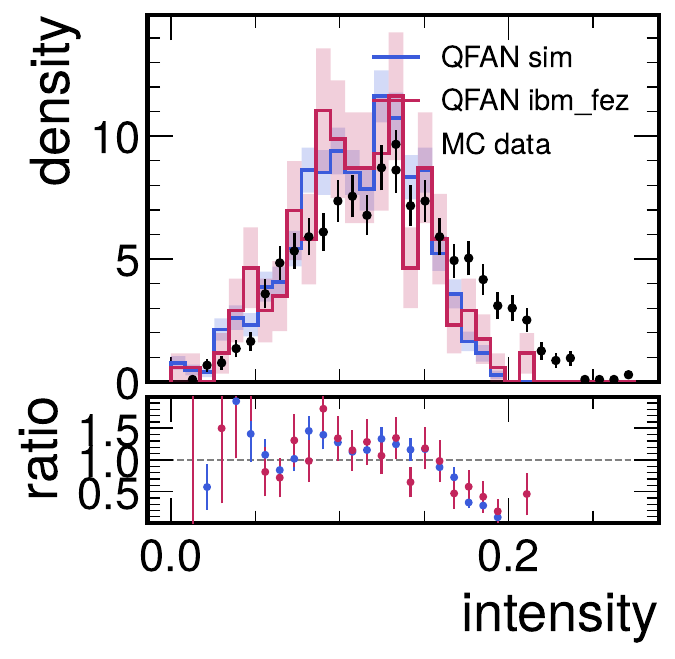}}\hfill
		\subfloat[Pixel 7]{\includegraphics[width=0.19\textwidth]{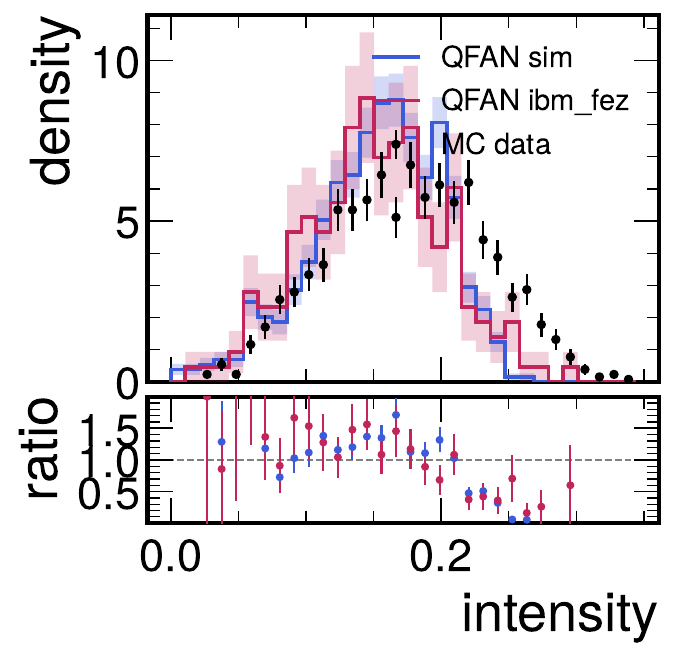}}\hfill
		\subfloat[Pixel 8]{\includegraphics[width=0.19\textwidth]{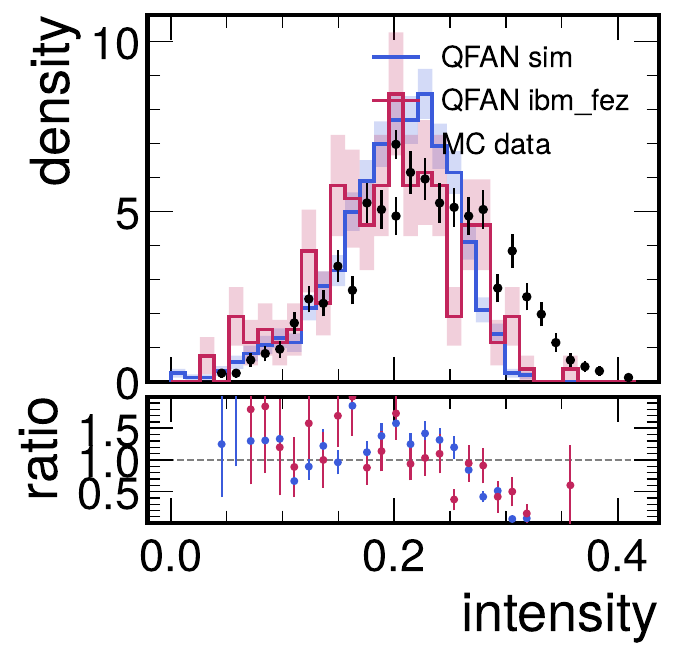}}\hfill
		\subfloat[Pixel 9]{\includegraphics[width=0.19\textwidth]{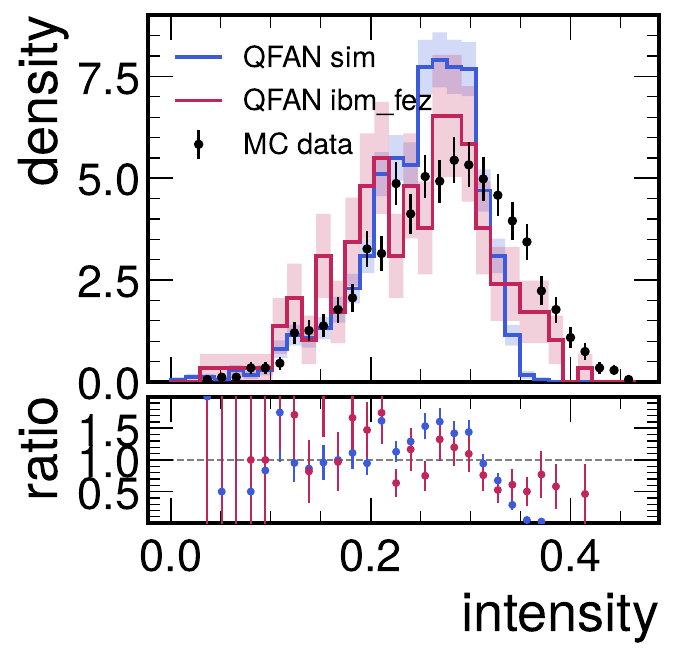}}\\
		\subfloat[Pixel 10]{\includegraphics[width=0.19\textwidth]{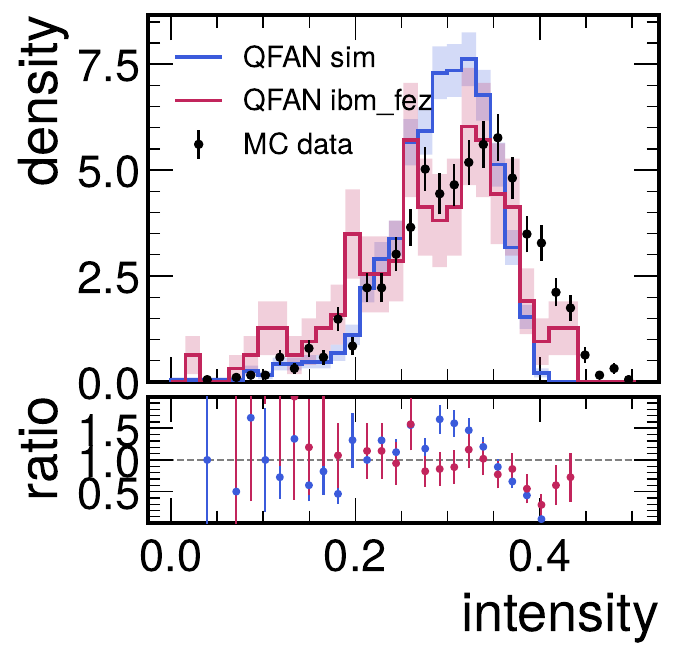}}\hfill
		\subfloat[Pixel 11]{\includegraphics[width=0.19\textwidth]{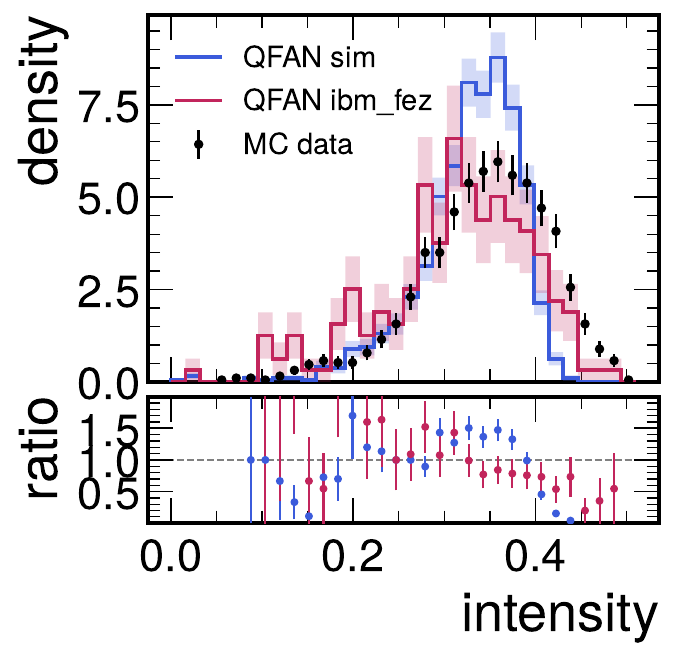}}\hfill
		\subfloat[Pixel 12]{\includegraphics[width=0.19\textwidth]{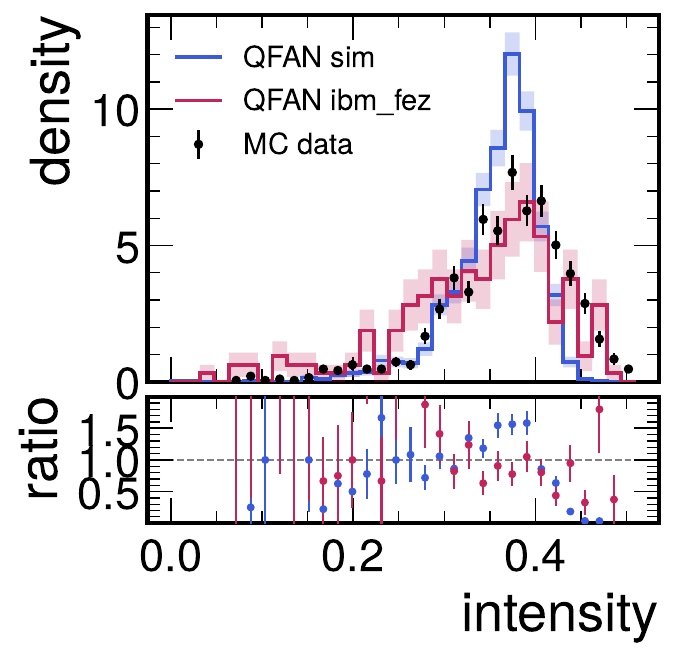}}\hfill
		\subfloat[Pixel 13]{\includegraphics[width=0.19\textwidth]{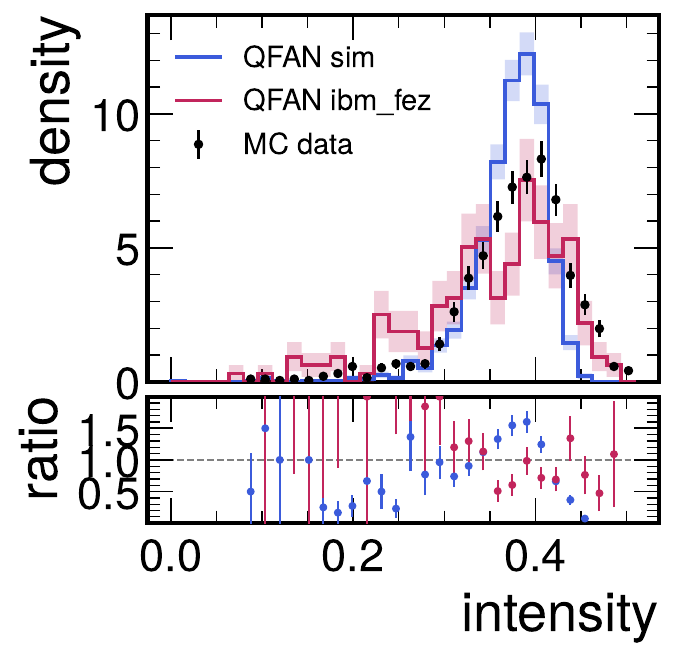}}\hfill
		\subfloat[Pixel 14]{\includegraphics[width=0.19\textwidth]{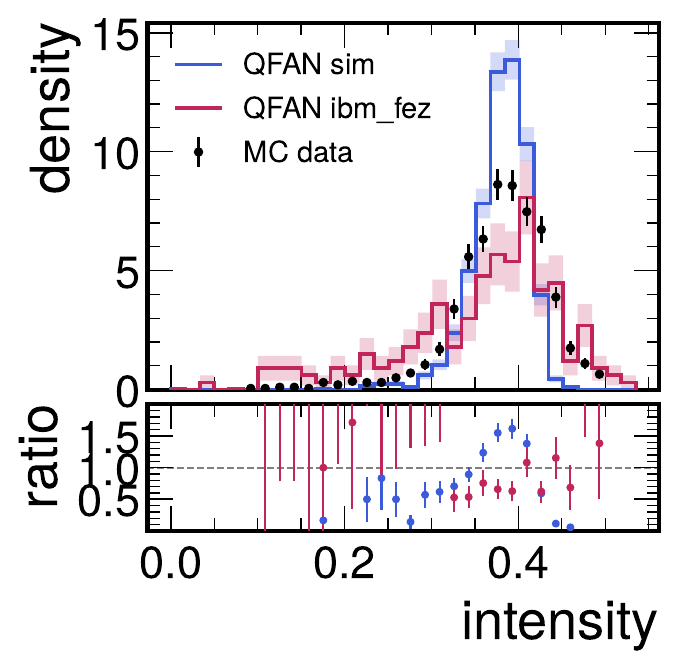}}\\
		\subfloat[Pixel 15]{\includegraphics[width=0.19\textwidth]{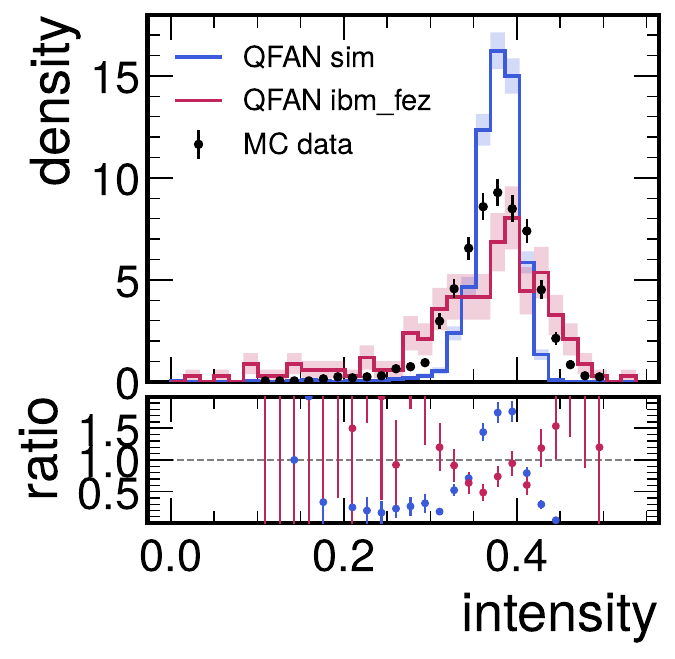}}\hfill
		\subfloat[Pixel 16]{\includegraphics[width=0.19\textwidth]{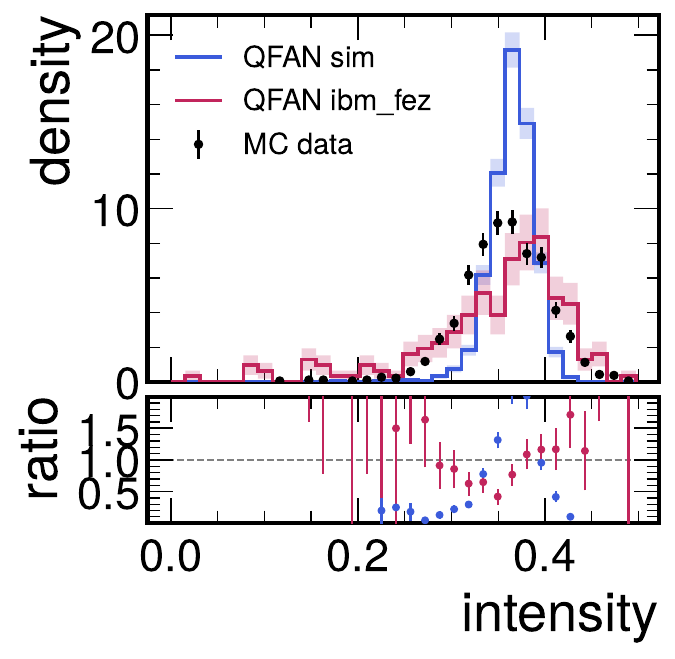}}\hfill
		\subfloat[Pixel 17]{\includegraphics[width=0.19\textwidth]{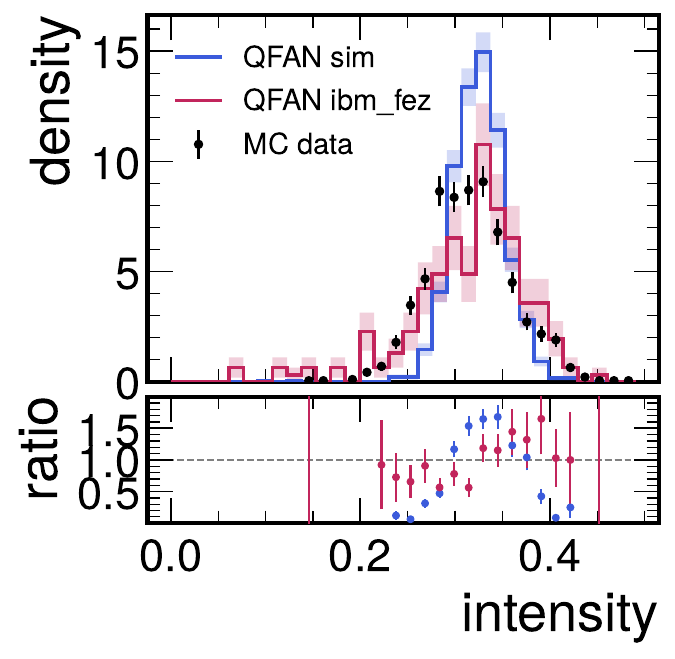}}\hfill
		\subfloat[Pixel 18]{\includegraphics[width=0.19\textwidth]{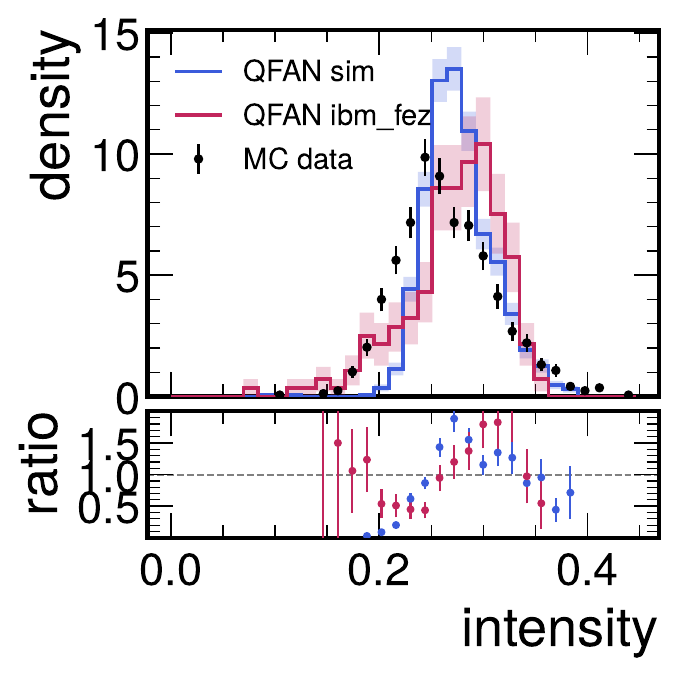}}\hfill
		\subfloat[Pixel 19]{\includegraphics[width=0.19\textwidth]{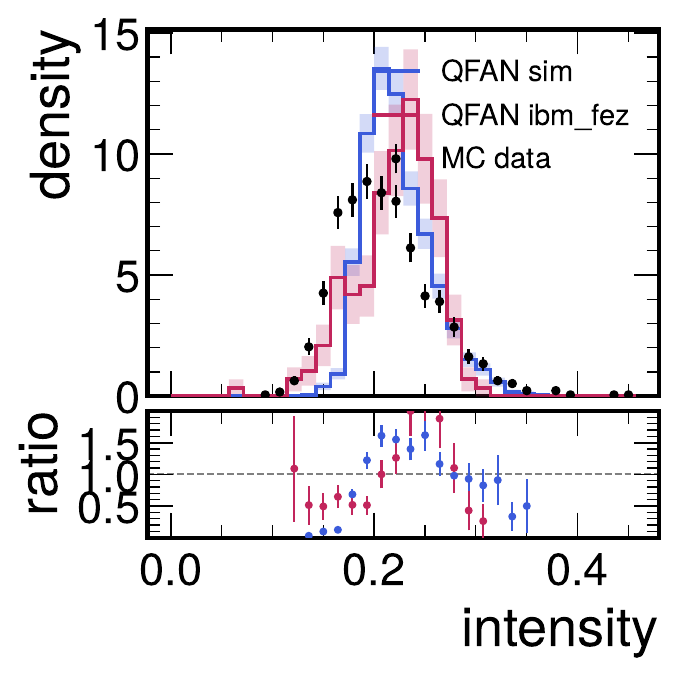}}\\
		\subfloat[Pixel 20]{\includegraphics[width=0.19\textwidth]{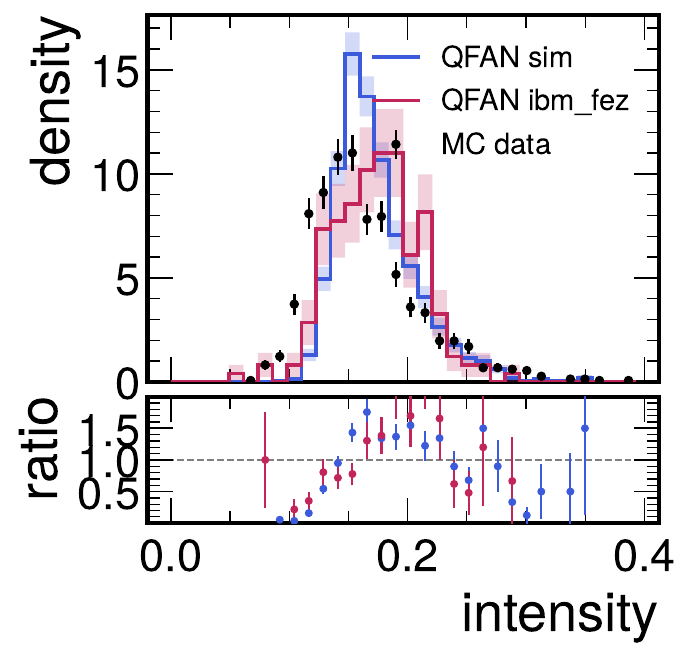}}\hfill
		\subfloat[Pixel 21]{\includegraphics[width=0.19\textwidth]{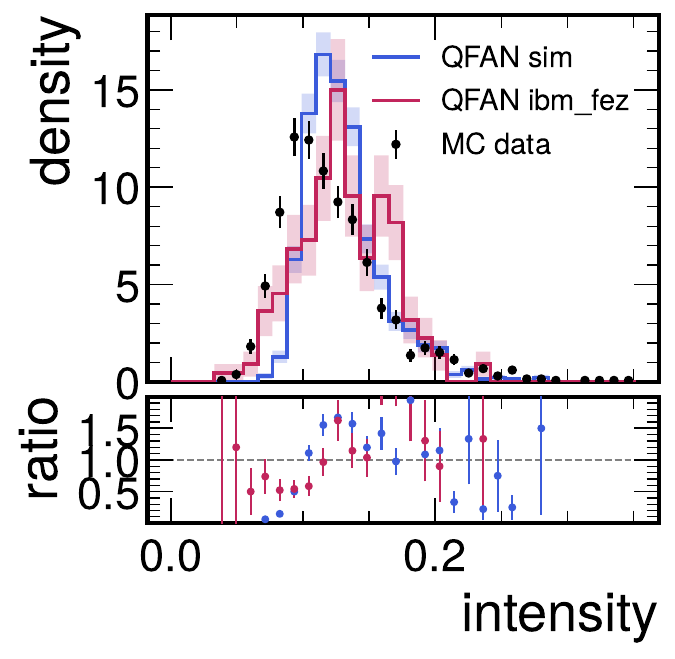}}\hfill
		\subfloat[Pixel 22]{\includegraphics[width=0.19\textwidth]{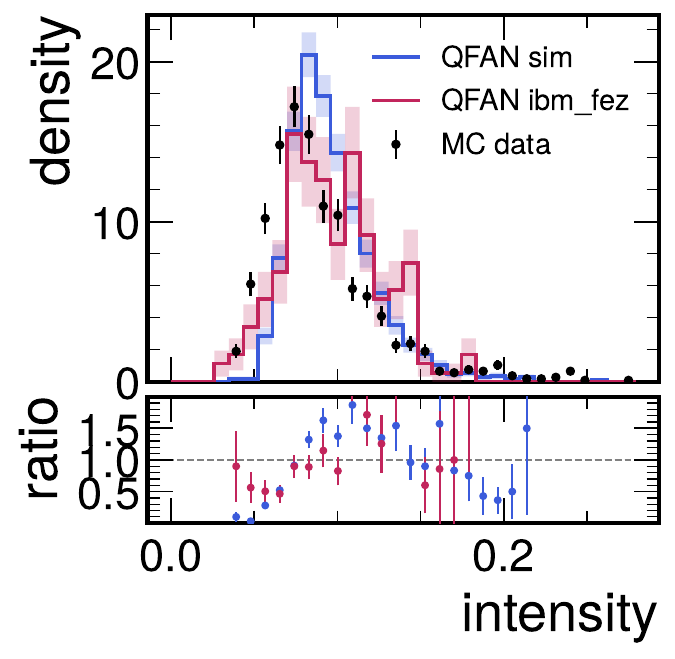}}\hfill
		\subfloat[Pixel 23]{\includegraphics[width=0.19\textwidth]{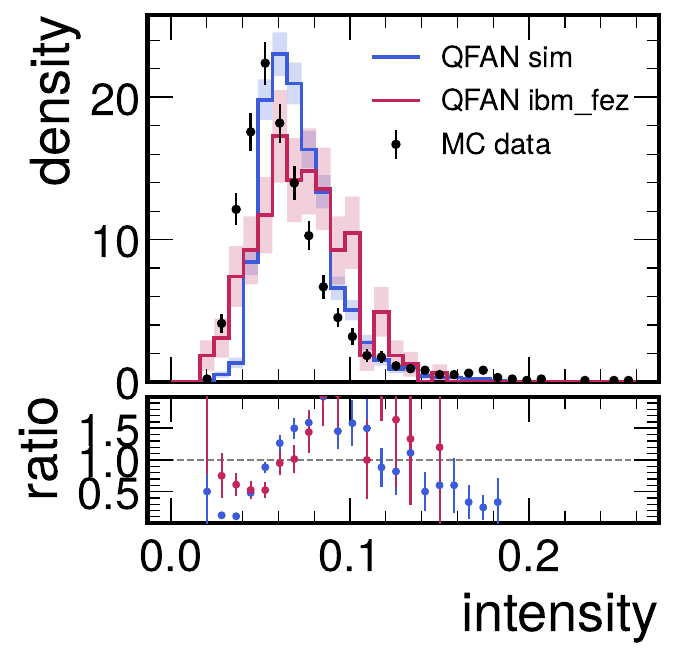}}\hfill
		\subfloat[Pixel 24]{\includegraphics[width=0.19\textwidth]{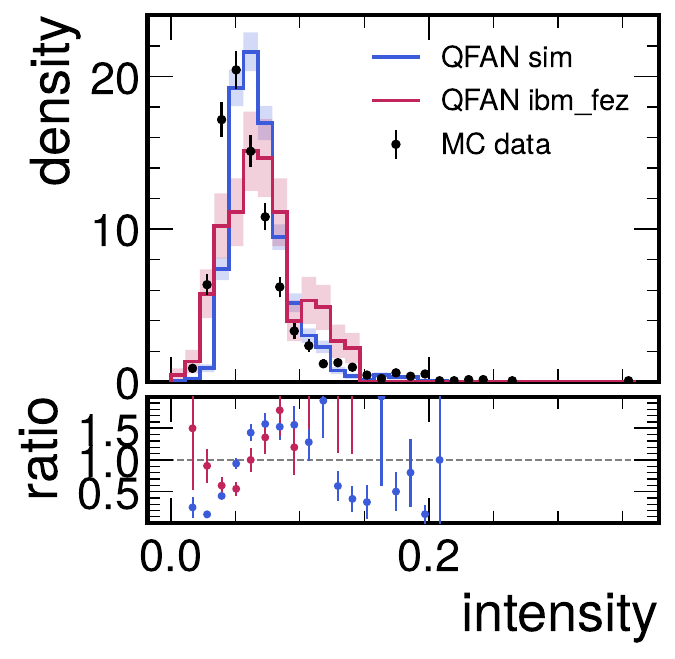}}
		\caption{Per-pixel marginals at $d{=}25$, $B{=}13$ (variable intensity scale). MC truth (black points), simulator QFAN (blue) and \texttt{ibm\_fez} QFAN (red), with ratio panels. The ratio panels scatter about unity without systematic drift toward the later blocks.}
		\label{fig:marginals_25px}
	\end{figure*}
	\endgroup
	
	\begin{figure*}[thb]
		\centering
		\subfloat[MC truth]{\includegraphics[width=0.32\textwidth]{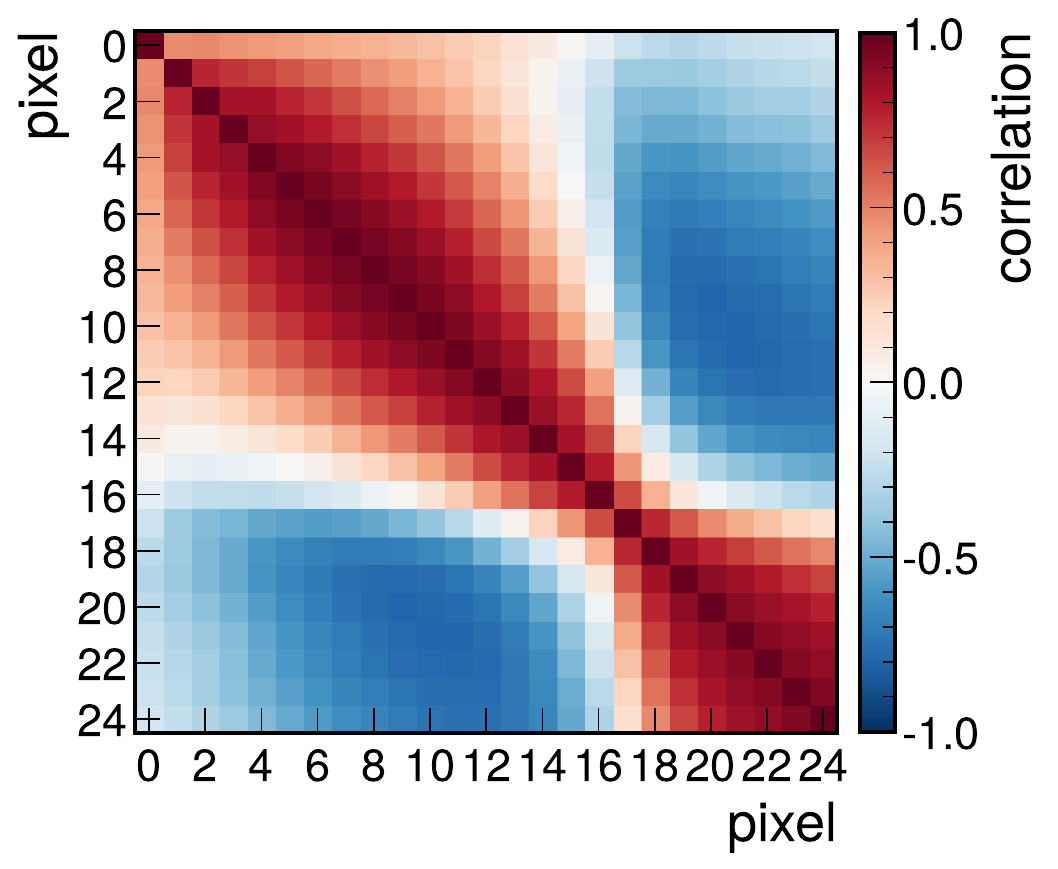}}\hfill
		\subfloat[QFAN (simulator)]{\includegraphics[width=0.32\textwidth]{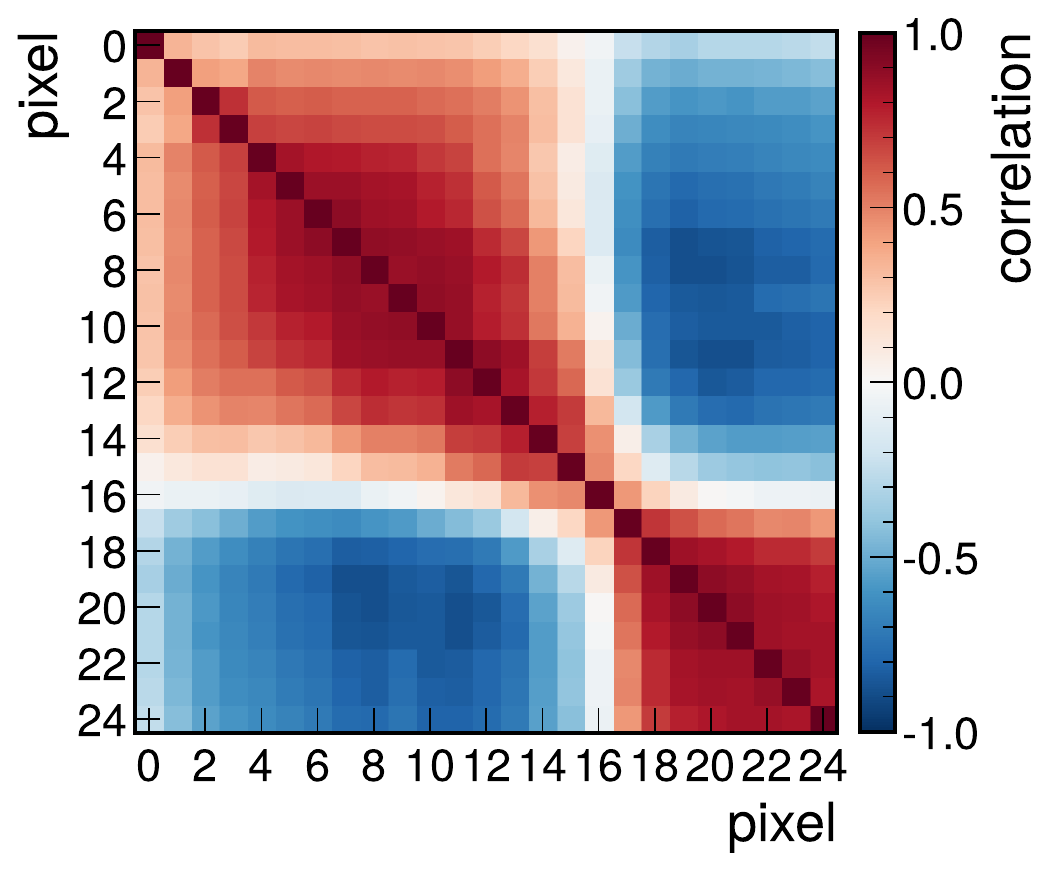}}\hfill
		\subfloat[QFAN (\texttt{ibm\_fez})]{\includegraphics[width=0.32\textwidth]{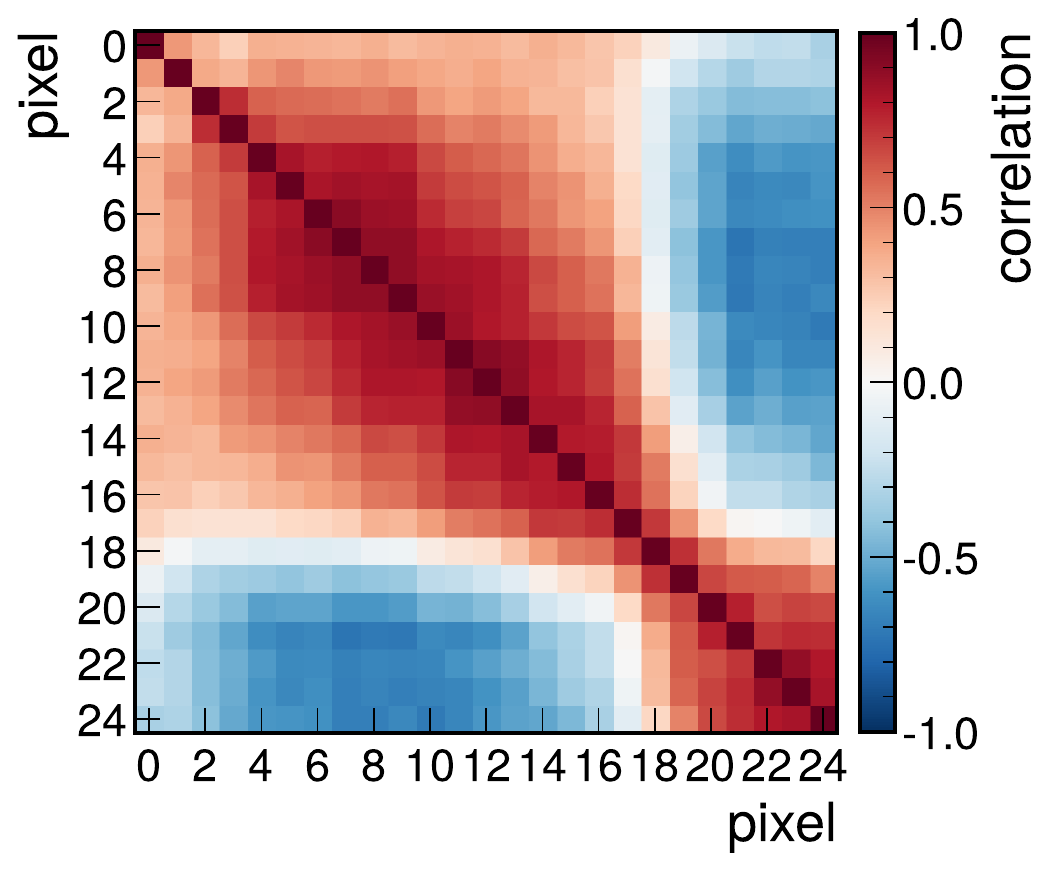}}
		\caption{Pearson correlation matrices at $d{=}25$, $B{=}13$. MC truth (a), simulator QFAN (b) and hardware QFAN (c). The block-diagonal structure and the negative cross-block correlations are reproduced on both backends. Residuals concentrate at block boundaries and do not grow from the first boundary to the last.}
		\label{fig:corr_25px}
	\end{figure*}

	\section{Author Contributions}
J.S. designed the model, implemented the simulation and training code, ran the numerical experiments and the hardware deployment, and wrote the manuscript. S.M. and F.R. contributed to the calorimeter dataset and its preparation. D.K. verified the correctness of the code, and K.B. supervised the project. All authors discussed the results and reviewed the manuscript.

An LLM was used for rephrasing parts of the manuscript text, for literature search and verification of citations, and for documenting the accompanying software. All results were verified independently by the authors against the released code.

	\section*{Acknowledgments}
	
	This research was supported in part through the Maxwell computational resources operated at Deutsches Elektronen-Synchrotron DESY (Hamburg, Germany), a member of the Helmholtz Association HGF. The authors acknowledge the support with funds from the Helmholtz Association HGF (Germany), Hamburgische Investitions- und F\"orderbank (IFB) (Germany), European Union's HORIZON MSCA Doctoral Networks program project ENGAGE (101034267), and the Ministry of Science, Research and Culture of the State of Brandenburg within the Center for Quantum Technologies and Applications (CQTA) (Germany).

	% ═══════════════════════════════════════════════════════
	\appendix
	% ═══════════════════════════════════════════════════════
	
	\section{Sketch distortion}\label{app:sketch}
	
	The count-sketch of Eq.~\eqref{eq:sketch_update} corresponds to a matrix $S\in\R^{m\times d}$ with exactly one nonzero entry per column, equal to a random sign in a uniformly chosen row. The induced inner-product estimator $\widehat{\langle\by,\by'\rangle}\equiv\langle S\by,S\by'\rangle$ is unbiased, and for standard count-sketch randomness satisfies
	\begin{equation}
		\E\!\left[\bigl(\widehat{\langle \by,\by'\rangle}-\langle \by,\by'\rangle\bigr)^2\right]\le \frac{C}{m}\|\by\|_2^2\|\by'\|_2^2
	\end{equation}
	for a universal constant $C$. Expanding the second moment and averaging over the sketch randomness leaves only collision terms, which are bounded as stated. This is the standard guarantee~\cite{charikar2002} and we use it only to justify that the sketch is a controlled summary, not to derive the working size.
	In practice we choose $m$ by the heuristic $m\gtrsim c\sqrt{d}$, motivated by the sparsity and spatial clustering of showers, which have far fewer than $d$ effectively independent entries. At $d{=}12$ and $d{=}25$ we use $m{=}32$, where the sketch is over-provisioned and distortion is negligible. This heuristic is not implied by the bound above.
	
	\section{Shot-noise accumulation}\label{app:noise}
	
	We bound the accumulation of measurement noise along the chain under a pessimistic model. Let $\hat{\bF}_\beta$ denote the measured feature matrix at block $\beta$, with $\E\|\hat{\bF}_\beta-\bF_\beta\|^2=\calO(p_f/k)$ from Born sampling of $p_f$ parities with $k$ records. Decoding by $\bW_\beta$ maps this to a pixel error of size at most $\kappa\,\calO(\sqrt{p_f/k})$, with $\kappa$ the decoder-gain bound above, and the sketch update injects it into the conditioning of every later block. Assuming errors add coherently across blocks, the end-to-end sketch perturbation after $B$ steps grows at most linearly in $B$ while falling as $k^{-1/2}$. Holding the end-to-end signal-to-noise ratio fixed as the image grows, with $B$ scaling linearly in $d$ at fixed block size, gives the sufficient budget quoted in Eq.~\eqref{eq:shots}.
	Both assumptions are deliberately conservative. Coherent addition is the worst case, and the ridge gain bound used is the loosest available. The resulting $\calO(d^2p_f)$ should be treated as an envelope, and the observation that $k{=}64$ suffices in our runs is consistent with, but not a test of, the bound.
	
	\section{Extension to 25-pixel images}\label{app:25px}
	
	We repeat the experiment at $d{=}25$ with the same $n_q{=}3$ circuit, now with $b{=}2$ and hence $B{=}13$ autoregressive steps, on both the simulator and \texttt{ibm\_fez}. The chain is more than twice as long as at $d{=}12$, the shared parameters must serve thirteen blocks, and the finer spatial resolution makes both marginals and correlations more demanding. The $d{=}25$ run uses $L{=}3$, 18 circuit parameters, $k{=}64$ records per block and 300 epochs. Hardware execution uses $k{=}64$ with the decoder refit on device-measured features and $n{=}200$ generated samples.

	\begin{figure}[b]
		\centering
		\includegraphics[width=\columnwidth]{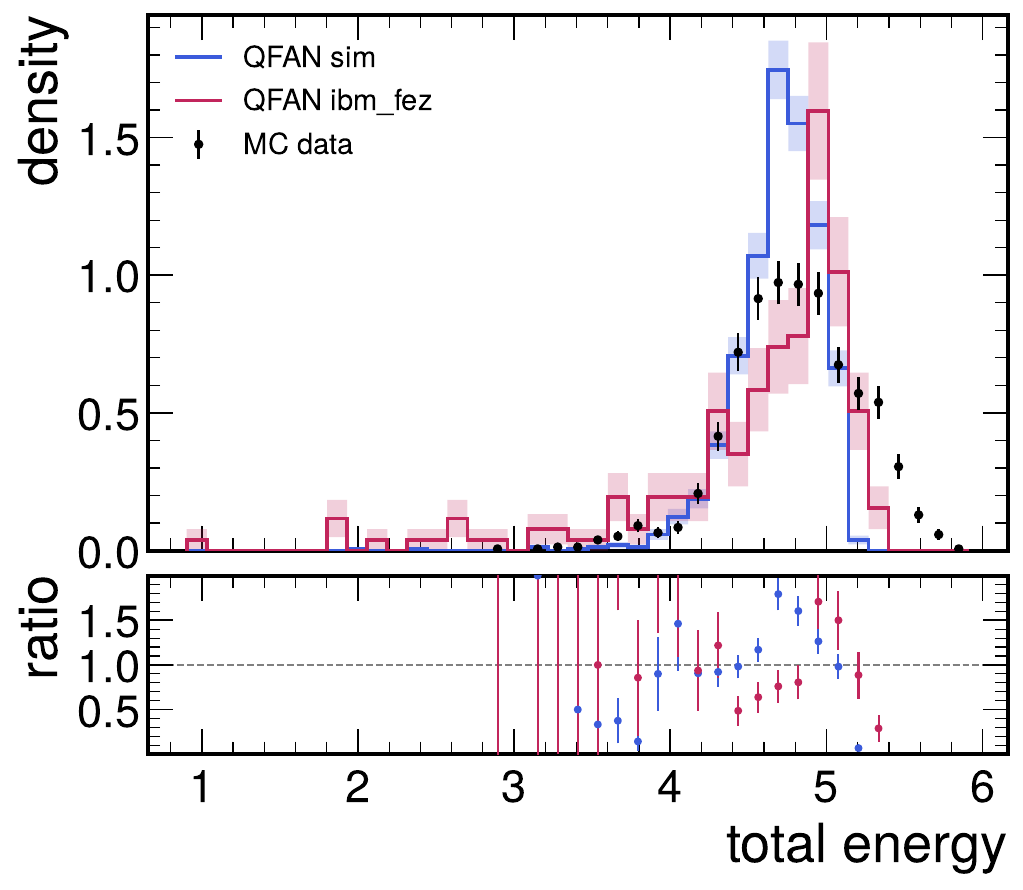}
		\caption{Total energy at $d{=}25$, $B{=}13$. MC truth (black points), simulator QFAN (blue), hardware QFAN (red). Agreement across the full five-block chain indicates that the energy budget is propagated by the sketch.}
		\label{fig:energy_25px}
	\end{figure}
	
	\paragraph{Marginals.} Figure~\ref{fig:marginals_25px} shows all 25 marginals for MC truth, simulator and hardware. Both backends reproduce the MC shapes, and the ratio panels show no systematic drift from the early to the late blocks, which is the behavior required if the sketch conditioning is to survive longer chains.
	
	\paragraph{Correlations.} Figure~\ref{fig:corr_25px} shows the $25\times25$ correlation matrices. The block-diagonal structure and the cross-block anticorrelations from energy conservation are present on both backends. Residuals remain concentrated at the block boundaries and, importantly for the scaling argument, do not increase from the first boundary to the last. Quantitatively, $\|\Delta\bm{C}\|_F/d$ is $0.152$ on the simulator and $0.299$ on \texttt{ibm\_fez}.
	
	The claim that boundary error does not accumulate along the chain can be made quantitative. Writing $\Delta_\beta$ for the absolute correlation error of the entry coupling the last pixel of block $\beta$ to the first pixel of block $\beta{+}1$, the twelve boundaries of the $d{=}25$ simulator run give
	\[
	\Delta_\beta = (0.260,\,0.103,\,0.063,\,0.027,\,0.020,\,0.001,
	\]
	\[
	\phantom{\Delta_\beta = (}0.033,\,0.300,\,0.086,\,0.016,\,0.079,\,0.117),
	\]
	with mean $0.092$. A linear fit in the boundary index has slope $-0.0033$ per step, i.e.\ mildly negative and consistent with zero. The last boundary ($0.117$) is smaller than the first ($0.260$), and the largest single value occurs in the middle of the chain ($0.300$ at the eighth boundary) rather than at its end. The variation between boundaries is therefore driven by which pixels they happen to couple, not by how far along the chain they sit, which is the behaviour required if longer chains are to remain feasible. This is measured at $B{=}13$.
	
	\paragraph{Total energy.} Figure~\ref{fig:energy_25px} shows the total energy over the full five-block chain. Both paths track the MC peak and width. As at $d{=}12$, the informative feature is the width rather than the peak (Sec.~\ref{sec:results}). Quantitatively, $W_1^{(E)}{=}0.115$ (simulator) and $0.292$ (hardware), with $\Var(E){=}0.076$ and $0.544$ respectively against a data value of $0.172$. The simulator under-disperses the total energy and the hardware over-disperses it.
	
	\paragraph{Reading of this appendix.} The $d{=}25$ run doubles the chain length relative to the main text and adds a hardware execution at that length. It is evidence that the architecture does not break immediately as the chain grows. 
	
	\bibliography{qfan_references}
	
\end{document}